\newcommand{\bfx}{\mathbf{x}}
\newcommand{\bfr}{\mathbf{r}}
\newcommand{\bfe}{\mathbf{e}}
\newcommand{\del}[2]{\frac{\partial #1}{\partial #2}}
\DeclareMathOperator{\Dir}{Dir}
\DeclareMathOperator{\Cat}{Cat}
\DeclareMathOperator{\CrossEntropy}{CrossEntropy}
\DeclareMathOperator{\Vol}{Vol}
\newcommand{\bfp}{\mathbf{p}}
\theoremstyle{plain}
\newtheorem{proposition}{Proposition}
\newtheorem{prop}{Proposition}
\theoremstyle{definition}
\theoremstyle{remark}
\icmltitlerunning{Dirichlet Flow Matching with Applications to DNA Sequence Design}
\begin{document}

\twocolumn[
\icmltitle{Dirichlet Flow Matching with Applications to DNA Sequence Design}


\icmlsetsymbol{equal}{*}

\begin{icmlauthorlist}
\icmlauthor{Hannes Stark}{equal,yyy}
\icmlauthor{Bowen Jing}{equal,yyy}
\icmlauthor{Chenyu Wang}{yyy}
\icmlauthor{Gabriele Corso}{yyy}\\
\icmlauthor{Bonnie Berger}{yyy,zzz}
\icmlauthor{Regina Barzilay}{yyy}
\icmlauthor{Tommi Jaakkola}{yyy}
\end{icmlauthorlist}

\icmlaffiliation{yyy}{CSAIL, Massachusetts Institute of Technology}
\icmlaffiliation{zzz}{Dept. of Mathematics, Massachusetts Institute of Technology}

\icmlcorrespondingauthor{Hannes Stark}{hstark@mit.edu}
\icmlcorrespondingauthor{Bowen Jing}{bjing@mit.edu}

\icmlkeywords{Machine Learning, ICML}

\vskip 0.3in
]



\printAffiliationsAndNotice{\icmlEqualContribution} 
\begin{abstract}
Discrete diffusion or flow models could enable faster and more controllable sequence generation than autoregressive models. We show that na\"ive linear flow matching on the simplex is insufficient toward this goal since it suffers from discontinuities in the training target and further pathologies. To overcome this, we develop \emph{Dirichlet flow matching} on the simplex based on mixtures of Dirichlet distributions as probability paths. In this framework, we derive a connection between the mixtures' scores and the flow's vector field that allows for classifier and classifier-free guidance. Further, we provide distilled Dirichlet flow matching, which enables one-step sequence generation with minimal performance hits, resulting in $O(L)$ speedups compared to autoregressive models. On complex DNA sequence generation tasks, we demonstrate superior performance compared to all baselines in distributional metrics and in achieving desired design targets for generated sequences.  Finally, we show that our classifier-free guidance approach improves unconditional generation and is effective for generating DNA that satisfies design targets. Code is available at \url{https://github.com/HannesStark/dirichlet-flow-matching}.
\end{abstract}

\section{Introduction}

Flow matching (FM) is a generative modeling framework that provides a simulation-free means of training continuous normalizing flows (CNFs) between noise and data distributions \citep{lipman2022flow, liu2022flow, albergo2022building}. Such flow models can be viewed as generalizing diffusion models \citep{song2021score} to permit more flexible design of iterative noising processes, which have proven useful in generative modeling on non-euclidean spaces such as compact Riemannian manifolds \citep{chen2023riemannian, bose2023se}. However, existing formulations of flow matching have focused on continuous spaces and have yet to treat \emph{discrete categorical data}---a notable shortcoming considering the many important applications of discrete generative modeling such as text generation and biological sequence design \citep{avdeyev2023dirichlet}.


In this work, we introduce \emph{Dirichlet flow matching} for generative modeling of discrete categorical data and sequences over such data. We frame such modeling as a transport problem between a uniform density over the probability simplex and a finitely supported distribution over the vertices of the simplex. We then learn a continuous vector field to parameterize this transport based on a conditional noising process that generates a time-evolving Dirichlet distribution. Na\"ively, it may appear more straightforward to define a noising process that linearly interpolates between data and noise, as is the dominant approach in flow matching on $\mathbb{R}^n$ \citep{liu2022flow, lipman2022flow, pooladian2023multisample} and which can be trivially adapted to the simplex. However, we show that such an approach---which we call linear flow matching---suffers from pathological behavior due to the contracting support of the resulting conditional probability paths. We carefully engineer Dirichlet FM to avoid these shortcomings while retaining the advantages of flow matching by being significantly simpler---both computationally and conceptually---than DDSM \citep{avdeyev2023dirichlet}, a discrete diffusion model that also relaxes data onto the simplex. 
Thanks to these properties, we posit (and confirm) that Dirichlet FM is significantly more fit for modeling complex discrete distributions than either linear FM or DDSM.

\begin{figure*}[ht!]
    \centering
    \begin{tikzpicture}
    \node[anchor=south west,inner sep=0] (image) at (0,0) {\includegraphics[width=0.49\textwidth]{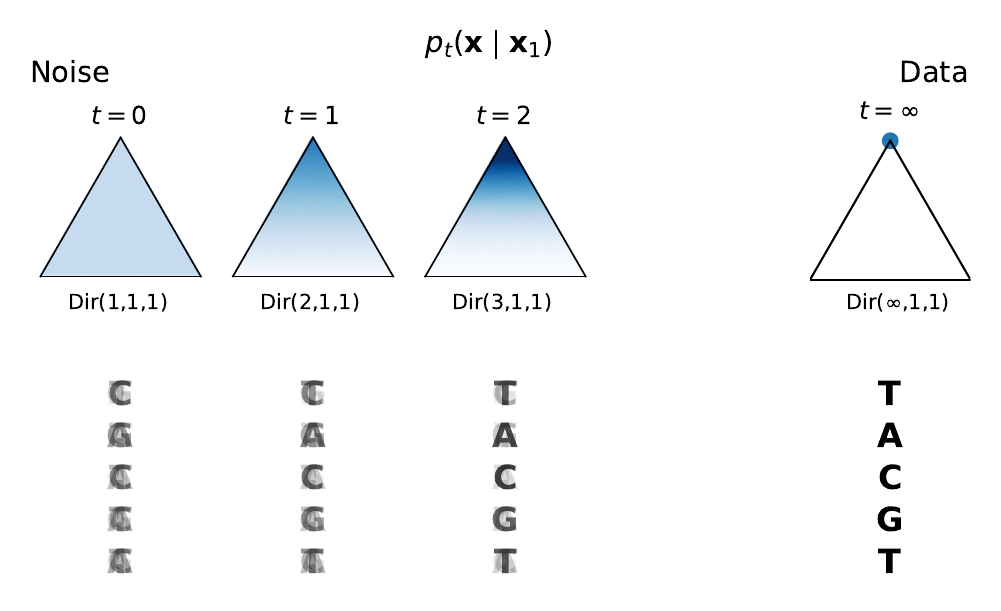}};
    \begin{scope}[x={(image.south east)},y={(image.north west)}]
        \draw[black] [->] (0.15,0.88) -- (0.85,0.88);
        \draw[black] [->] (0.6,0.67) -- (0.8,0.67);
        \draw[black] [->] (0.6,0.23) -- (0.8,0.23);
    \end{scope}
    \end{tikzpicture}
    \vrule
    \includegraphics[width=0.49\textwidth]{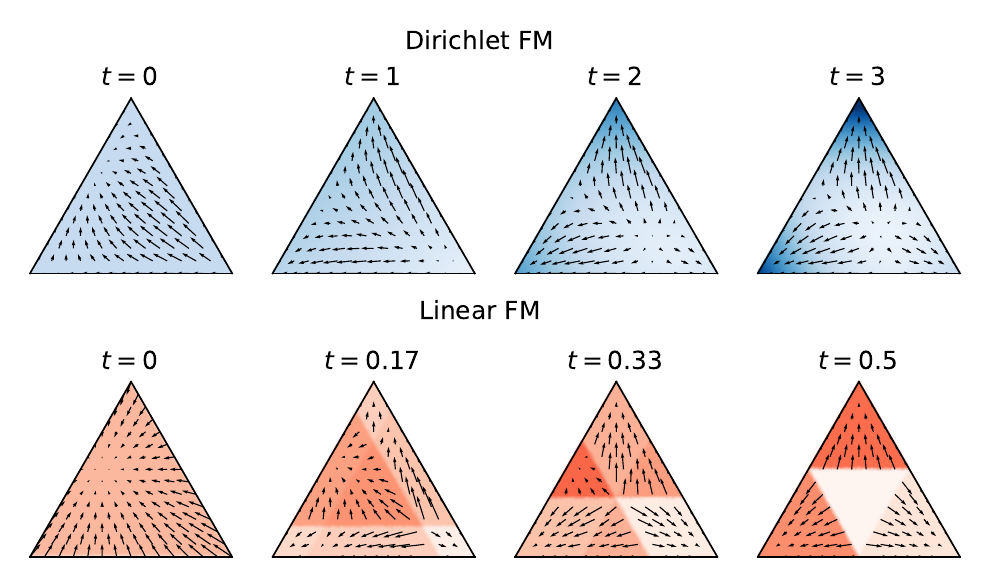}
    \vspace{-0.3cm}
    \caption{\textbf{Overview of Dirichlet flow matching}. We represent a sequence of discrete variables as a sequence of simplices. Here, we only show the simplex of one of the sequence positions. \emph{Left:}  starting from uniform noise on the probability simplex, we define conditional probability paths that approach a point mass at the vertex via a one-parameter family of Dirichlet distributions. We view a sequence of tokens as a sequence of simplices for which the probability path corresponds to noising tokens via \emph{superposition} with all other possible tokens {(during inference, the simplices depend on each other through a joint denoiser)}. \emph{Right:} Comparison of the marginal probability paths and vector fields of Dirichlet and linear FM. The vector fields of Dirichlet FM are smooth in time and space, unlike linear FM.}
    \label{fig:figure1}
\end{figure*}


%

Dirichlet FM is equipped with several features that showcase its potential as a generative modeling framework. A key aspect of modern frameworks such as diffusion is the ability to implement \emph{guidance} towards a target class or attribute \citep{dhariwal2021diffusion, ho2022classifier}. While flow matching guidance has been explored for (diffusion-like) Gaussian noising processes in $\mathbb{R}^n$ \citep{dao2023flow, zheng2023guided}, no general theory of flow matching guidance is available. However, by deriving a relationship between the score of a mixture of Dirichlets and the resulting flow, we implement both classifier guidance and classifier-free guidance within the framework of Dirichlet FM. Additionally, unlike autoregressive models or diffusion models that inject categorical noise \citep{austin2021structured, hoogeboom2021autoregressive, campbell2022continuous}, the sampling process of Dirichlet FM defines a deterministic map between noise and data. Hence, we can \emph{distill} the sampling process and generate arbitrarily long sequences in a single forward pass.

We evaluate Dirichlet FM on three DNA sequence datasets with several complex tasks that pose diverse challenges. First, we demonstrate that Dirichlet FM better generates promoter DNA sequences with desired regulatory activity compared to baselines such as an autoregressive language model or discrete diffusion models, including DDSM \citep{avdeyev2023dirichlet}. Second, we show on two enhancer DNA sequence datasets that Dirichlet FM improves upon autoregressive language modeling in capturing the data distribution with an FBD (distributional similarity) of 1.9 vs. 36.0 in Melanoma DNA and 1.0 vs. 25.2 in Fly Brain DNA. Third, we demonstrate that Dirichlet FM guidance can improve unconditional sequence generation and generate cell-type specific enhancer DNA sequences that improve upon the (experimentally validated) sequences of \citet{Taskiran2023cell}. Lastly, distilled Dirichlet FM generates sequences in a single step, resulting in orders-of-magnitude speedups with minimal performance degradation. 

\section{Background}

\subsection{Flow Matching}\label{sec:fm_background}

In flow matching \citep{lipman2022flow, liu2022flow, albergo2022building}, we consider a noisy distribution $\bfx_0 \sim q_0$ and data distribution $\bfx_1 \sim p_\text{data}$ and regress a neural network against a vector field that transports $q(\bfx_0)$ to $p_\text{data}(\bfx_1)$. To do so, we define a \emph{conditional probability path}---a time-evolving distribution $p_t(\bfx \mid \bfx_1), t\in[0,1]$ conditioned on $\bfx_1$ with boundary conditions $p_0(\bfx \mid \bfx_1) = q(\bfx)$ and $p_1(\bfx \mid \bfx_1) \approx \delta(\bfx - \bfx_1)$. We additionally assume knowledge of a \emph{conditional vector field} $u_t(\bfx~\mid~\bfx_1)$ that generates $p_t(\bfx \mid \bfx_1)$, i.e. satisfying the transport equation
\begin{equation}\label{eq:continuity-equation}
    \del{}{t}p_t + \nabla \cdot (p_tu_t) = 0
\end{equation}
Then, the marginal probability path
\begin{equation}
    p_t(\bfx) = \int p_t(\bfx \mid \bfx_1) p_\text{data}(\bfx_1)\, d\bfx_1 
\end{equation}
interpolates between noise $p_0 = q_0$ and data $p_1 \approx p_\text{data}$ and is generated by the \emph{marginal vector field}
\begin{equation}
    v_t(\bfx) = \int u_t(\bfx\mid \bfx_1) \frac{p_t(\bfx \mid \bfx_1) p_\text{data}(\bfx_1)}{p_t(\bfx)}\, d\bfx_1
\end{equation}
Thus, by learning and integrating a neural network $\hat v(\bfx, t; \theta) \approx v_t(\bfx)$, we can generate data from noisy samples $\bfx_0 \sim q$. The core design decision is the choice of appropriate conditional probability path $p_t(\bfx \mid \bfx_1)$ and associated vector field $u_t(\bfx \mid \bfx_1)$. Although it is possible to define these directly, it is often simpler to instead define a \emph{conditional flow map} $\psi_t(\bfx_0 \mid \bfx_1)$ that directly transports $\bfx_0 \sim q$ to the intermediate distribution $p_t(\bfx \mid \bfx_1)$. The flow map immediately provides the corresponding vector field:
\begin{align}
    u_t(\bfx \mid \bfx_1) = \frac{d}{dt}\psi_t(\bfx_0 \mid \bfx_1)
\end{align}
With this formulation, the required boundary conditions simplify to $\psi_0(\bfx_0 \mid \bfx_1) = \bfx$ and $\psi_1(\bfx_0 \mid \bfx_1) = \bfx_1$. As advocated by several works \citep{liu2022flow, lipman2022flow, pooladian2023multisample, tong2023improving}, the flow map (also called \emph{interpolant}) is often chosen to follow the simplest possible path between the two endpoints---e.g., linear in Euclidean spaces and geodesic on Riemannian manifolds \citep{chen2023riemannian}.

\subsection{Related Discrete Diffusion Model Works} \label{sec:discrete-diffusion}


Existing discrete diffusion frameworks can be split into 4 categories. Firstly, simplex-based approaches frame discrete data as vertices of a simplex and generate it starting from a Dirichlet prior over the whole simplex \citep{richemond2022categorical, floto2023diffusion}. Among those, {DDSM} \citep{avdeyev2023dirichlet} is most related to our work and converges to a Dirichlet distribution via Jacobi diffusion processes and the stick-breaking transform. We note that none of these simplex-based approaches feature Dirichlet distributions as intermediate distributions of the noising process---a key aspect of our approach. 

The second class of discrete diffusion models fully relaxes discrete data into continuous space without any constraints and uses, e.g., a standard Gaussian as prior \citep{han2022ssd, chen2023analog, frey2024protein}. The third paradigm, established by D3PM \citep{austin2021structured}, operates on discrete samples of noise distributions constructed by injecting discrete noise into data \citep{campbell2022continuous, igashov2023retrobridge, vignac2023digress, penzar2023legnet}. Lastly, \emph{latent} discrete diffusion models train an additional network to obtain continuous latents for which they train a conventional diffusion model \citep{dieleman2022continuous, li2023latent}. 



\subsection{Promoter and Enhancer DNA} \label{sec:dna-background}
DNA is a sequence with \emph{base pairs} as tokens (3 billion for humans) and a vocabulary of 4 \emph{nucleotides} (A, T, C, G). Parts of DNA encode genes that are transcribed into mRNA and then translated to functional proteins. \emph{Promoters} and \emph{enhancers} refer to noncoding portions of DNA that regulate the expression level of these genes and play important roles in eukaryotic organisms such as humans \citep{dunham2012encode, luo2020encode}. More specifically, a promoter for a gene is the DNA sequence next to the gene where the transcriptional machinery binds and starts transcribing DNA to mRNA \citep{haberle2018eukaryotic}. Meanwhile, enhancers are sequences that can be distant in the DNA sequence (millions of base pairs) but are close in 3D space \citep{panigrahi2021mechanisms} and regulate the recruitment of this transcriptional machinery. Unlike promoters, enhancers often regulate transcription in specific cell types. Hence, while both types of DNA subsequences are important for gene therapy \citep{whalen1994genetherapeutics}, the cell type specificity of enhancers enables targeting, e.g., only cancer cells. 

\textbf{Designing enhancers.} Recently \citet{Taskiran2023cell} and \citet{deAlmeida2023targeted} demonstrated cell type-specific enhancer design via an \emph{optimization} procedure starting from an initial sequence guided by a cell-type activity classifier. However, such sequence designs may not follow the empirical distribution of enhancers, which would be captured by a generative model. Our work sets the foundation for more principled conditional sequence design by learning and drawing from conditional data distributions.



\section{Method}

\subsection{Flow Matching on the Simplex}

Let $S_K$ be the probability simplex in $K$-dimensional space:
\begin{equation}
    S_K = \{\bfx = (x_1, \ldots x_K)^T \in \mathbb{R}^K \mid \boldsymbol{1}^T\bfx = 1, \bfx \ge 0\}
\end{equation}
Given a $K$-class categorical distribution with probabilities $p_i, \sum_{i=1}^K p_i = 1$, we relax this distribution into continuous space by converting it to a mixture of point masses at the vertices of $S_k$ (with $\bfe_i$ as the $i$th one-hot vector):
\begin{equation}
    p_\text{data}(\bfx) = \sum_{i=1}^K p_i\delta(\bfx - \bfe_i)
\end{equation}
We then define the noisy prior to be the uniform density on the simplex, or a \emph{Dirichlet distribution} with parameter vector $\boldsymbol{\alpha}$ given by the all ones vector:
\begin{equation}\label{eq:uniform-dirichlet}
    q_0(\bfx) = \Dir(\bfx; \boldsymbol{\alpha} = (1, \ldots 1)^T) = \Gamma(K) 
\end{equation}
Our objective is then to learn a vector field, using some choice of conditional probability path (discussed later), to transport $q_0$ to $p_\text{data}$. Typically, the neural network directly parameterizes the vector field and is trained via the $L_2$-like conditional flow-matching loss
\begin{equation}
    \mathcal{L}(\theta) = \mathbb{E}[\lVert u_t(\bfx \mid \bfx_1) - \hat v(\bfx, t; \theta) \rVert^2] 
\end{equation}
where the expectation is taken over $t \sim \mathcal{U}[0,1], \bfx_1 \sim p_\text{data}, \bfx \sim p_t(\cdot \mid \bfx_1)$. However, we instead train a \emph{denoising classifier} via a cross-entropy loss
\begin{equation}
    \mathcal{L}(\theta) = -\mathbb{E}[\log \hat p(\bfx_1 \mid \bfx; \theta)] 
\end{equation}
At inference time, we then parameterize the vector field via
\begin{equation} \label{eq:marginal-vectorfield}
    \hat v(\bfx, t;\theta) = \sum_{i=1}^K u_t(\bfx \mid \bfx_1 = \bfe_i) \hat p(\bfx_1 = \bfe_i \mid \bfx; \theta)
\end{equation}
It can be shown (Appendix \ref{appx:method-details}) that the two losses have the same minimizer, and thus, the cross-entropy is a valid flow-matching objective. The advantages of this approach are twofold: (1) it ensures that the learned vector field is restricted to the tangent plane of the simplex (i.e., the components sum to zero), and (2) the conditional vector field does not need to be evaluated at training time.

For simplicity, our discussion focuses on modeling categorical data on the simplex. However, in practice, we are interested in \emph{sequences} of variables relaxed onto the \emph{multi-simplex} $S_K^N$. At inference time, the simplices depend on each other through a learned denoiser that outputs token-wise logits conditioned on all noisy inputs. This extension to a sequence of categorical tokens is analogous to the development in previous works, such as D3PM (\citep{austin2021structured} pg 3) and DDSM \citep{avdeyev2023dirichlet}. In particular, noise is added to each token independently to produce a noisy sequence, and the model is trained to predict the denoised token at each site. The model makes this prediction using information about all noisy tokens, thus injecting dependencies between positions into the generative process. The flow field is then constructed independently at each sequence position from the corresponding logits.




\subsection{Designing Simplex Flow Matching} \label{sec:linear-fm}

\begin{figure}
    \centering
    \includegraphics[width=\columnwidth]{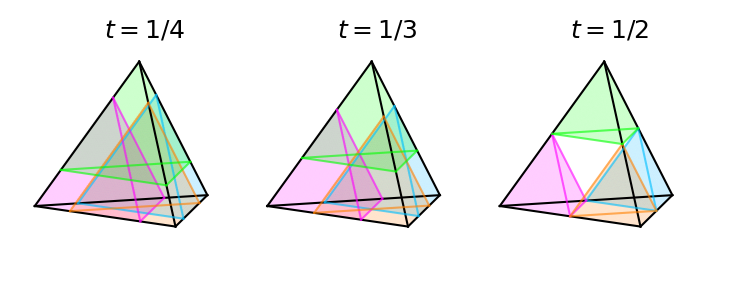}
    \includegraphics[width=0.85\columnwidth]{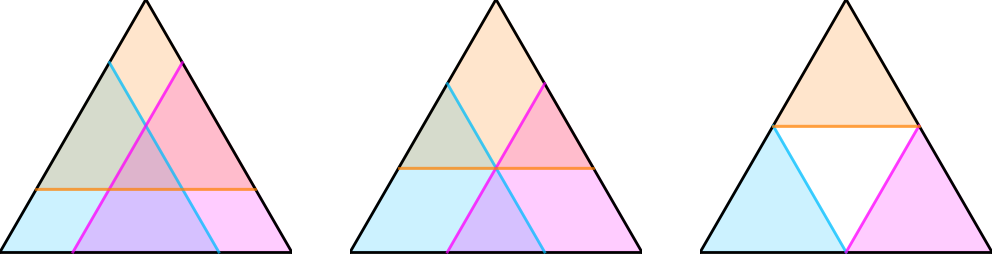}
    \vspace{-0.2cm}\caption{\textbf{Pathological behavior of linear flow matching} on the simplex with $K=4$ (\emph{top}) and $K=3$ (\emph{bottom}). Each color represents a conditional probability path evolving over time toward its target vertex. At $t=1/4$, $t=1/3$, and $t=1/2$, the region of overlap between 4, 3, and 2 conditional probability paths disappears, respectively, corresponding to a shrinking set of possible values of $\bfx_1 \mid \bfx$ for any $\bfx$.} 
    \label{fig:linear-probability-path}
\end{figure}
As outlined in Section~\ref{sec:fm_background}, there are two options to define a conditional probability path $p_t(\bfx \mid \bfx_1)$ and corresponding vector field $u_t(\bfx \mid \bfx_1)$ to train a flow model with:
\begin{enumerate}
    \item \textbf{Interpolant perspective:} Define an {interpolant} $\psi_t(\bfx_0 \mid \bfx_1)$, which provides the density $p_t(\bfx \mid \bfx_1)$ implicitly but allows one to easily sample from it, and obtain the conditional vector field {trivially} by taking the derivative $u_t = \partial_t \psi_t$.
    \item \textbf{Probability path perspective:} Define $p_t(\bfx \mid \bfx_1)$ explicitly and solve for $u_t(\bfx \mid \bfx_1)$ that satisfies the transport equation which can be {non-trivial}.
\end{enumerate}
Following extant works on flow matching, the most natural way to proceed for the simplex would be to follow the \emph{interpolant perspective} and use the linear flow map employed in \citet{lipman2022flow, pooladian2023multisample}:
\begin{align}
    \psi_t(\bfx_0 \mid \bfx_1) &= (1-t)\bfx_0 + t\bfx_1\label{eq:linear-flow-map}\\
    u_t(\bfx \mid \bfx_1) &= \frac{\bfx_1 - \bfx}{1-t} = \bfx_1 - \bfx_0 \label{eq:linear-vf}
\end{align}
Since $S_K$ is a Euclidean space, these operations remain well-defined, and the interpolant transports all points on the simplex to $\bfx_1$ at $t=1$ via straight paths. However, a pathological property emerges when such conditional probability paths are marginalized over $p_\text{data}(\bfx_1)$ over the course of flow matching training:

\begin{proposition}
    Suppose that a flow matching model is trained with the linear flow map (Equation \ref{eq:linear-flow-map}). Then, for all $k = 2, \ldots K$ and $\bfx \sim p_t(\bfx)$, the converged model posterior $p(\bfx_1 \mid \bfx) \propto p_t(\bfx \mid \bfx_1)p_\text{data}(\bfx_1)$ has support over at most $k-1$ vertices for times $t > 1/k$.
\end{proposition}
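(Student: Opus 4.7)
The plan is to directly characterize the support of the conditional probability path $p_t(\bfx \mid \bfx_1 = \bfe_i)$ under the linear interpolant, then use Bayes' rule and a pigeonhole argument on the simplex. Throughout I interpret ``converged model posterior'' as the true Bayes posterior for the conditional probability path, which the paper has already justified by noting that the $L_2$ and cross-entropy objectives share the same minimizer.

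First I would unpack the interpolant. Starting from $\bfx_0 \sim \Dir(\mathbf{1})$ uniform on $S_K$, the map $\psi_t(\bfx_0 \mid \bfe_i) = (1-t)\bfx_0 + t\bfe_i$ produces samples with $x_j = (1-t)(\bfx_0)_j$ for $j \neq i$ and $x_i = t + (1-t)(\bfx_0)_i$. Since $(\bfx_0)_i \geq 0$ this forces $x_i \geq t$, and conversely any $\bfx \in S_K$ with $x_i \geq t$ is the image of a valid $\bfx_0 \in S_K$ (take $(\bfx_0)_i = (x_i - t)/(1-t)$ and $(\bfx_0)_j = x_j/(1-t)$ for $j \neq i$; one checks that these entries are nonnegative and sum to $1$). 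Hence
\[
  \operatorname{supp} p_t(\cdot \mid \bfx_1 = \bfe_i) \;=\; A_i(t) \;:=\; \{\bfx \in S_K : x_i \geq t\}.
\]

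Second, because $p_\text{data}$ is supported on the vertices $\{\bfe_1,\ldots,\bfe_K\}$, the posterior factorizes as $p(\bfx_1 = \bfe_i \mid \bfx) \propto p_t(\bfx \mid \bfe_i)\, p_\text{data}(\bfe_i)$, so vertex $\bfe_i$ lies in the support of the posterior at $\bfx$ if and only if $\bfx \in A_i(t)$ (up to data-null vertices). Thus counting vertices in the support of the posterior at $\bfx$ reduces to counting coordinates of $\bfx$ that exceed $t$.

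Third, a one-line pigeonhole closes the argument: if $\ell$ coordinates of $\bfx \in S_K$ satisfy $x_i \geq t$, then $1 = \sum_i x_i \geq \ell t$, so $\ell \leq 1/t$; for $t > 1/k$ this yields $\ell < k$, i.e., at most $k-1$ vertices. I do not anticipate any serious obstacle---the claim is essentially a geometric observation that a point-mass target combined with a straight-line interpolant induces a hard halfspace cutoff $x_i \geq t$, after which the counting is immediate. The only mild care point is handling the measure-zero boundary $x_i = t$, but this does not affect ``for $\bfx \sim p_t$'' statements.
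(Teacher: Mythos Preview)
Your proposal is correct and follows essentially the same approach as the paper: both identify that the conditional path $p_t(\cdot\mid\bfe_i)$ is supported on $\{x_i\ge t\}$ and then use the simplex constraint $\sum_i x_i=1$ to bound the number of coordinates exceeding $t$. The only cosmetic difference is that the paper states the closed-form density and argues by contradiction, whereas you derive the support directly from the interpolant and do the pigeonhole count forward.
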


Conceptually, this means that as the model transports samples on the simplex at $t=0$ to the vertices of the simplex at $t=1$, it must eliminate or rule out a possible destination vertex at each of the times $\frac{1}{K}, \frac{1}{K-1}, \ldots \frac{1}{2}$, if not earlier. As $K$ becomes large, an increasingly large fraction of the model capacity must be allocated to a smaller and smaller fraction of the total time and trajectory length---indeed, for all $K$, the posterior for times $t > 1/2$ reduces to the $\arg\max$ operator. Further, since the marginal field is increasingly discontinuous (i.e., rapidly changing directions and settings entries to zero corresponding to eliminated vertices), the model becomes increasingly sensitive to integration step size. We posit---and empirically verify in Section \ref{sec:experiments}---that these factors significantly hurt the performance of linear flow matching, especially for higher dimensionalities $K$. 

Upon examination of the conditional probability paths (Figure~\ref{fig:linear-probability-path}), it becomes clear that this pathological behavior is due to the shrinking support of the conditional paths that arise from linear FM. Unfortunately, by following the \emph{interpolant perspective}, we are unable to directly control the conditional path. Hence, to obtain a method that does not suffer from linear FM's pathologies, we proceed with the \emph{probability path perspective} and directly define the conditional probability path so that it has support on the entire simplex {at all times}, as described next.

\subsection{Dirichlet Flow Matching}

\paragraph{Probability path $p_t(\bfx \mid \bfx_1)$.} Recall the \emph{Dirichlet distribution's} probability density function (where $\mathcal{B}$ is the beta function as in Equation \ref{eq:beta-function}):
\begin{equation} \
    \Dir(\bfx; \alpha_1, \ldots \alpha_K) = \frac{1}{\mathcal{B}(\alpha_1, \ldots \alpha_K)}\prod_{i=1}^K x_i^{\alpha_i-1}
\end{equation}
Following the \emph{probability path perspective}, we first define a conditional probability path with $t \in [0, \infty)$ as:
\begin{equation}\label{eq:probability-path}
    p_t(\bfx \mid \bfx_1 = \bfe_i) = \Dir(\bfx; \boldsymbol{\alpha} = \boldsymbol{1} + t\cdot\bfe_i)
\end{equation}
When $t=0$, this is equal to the uniform noise distribution (Equation \ref{eq:uniform-dirichlet}). As $t \rightarrow \infty$, the $i$th entry of the $\boldsymbol{\alpha}$ parameter vector increases while the others remain constant, concentrating the density towards a point mass on the $i$th vertex, corresponding to the $t=1$ boundary condition in standard flow matching.\footnote{We continue to call the data sample $\bfx_1$, and in practice, integrate to some large fixed time (typically $t=8$) and take the $\arg\max$ of the final model posterior.} Hence, this family of Dirichlet distributions provides a conditional probability path with the required boundary conditions while retaining support over the entire simplex, as desired.

\paragraph{Vector field $u_t(\bfx \mid \bfx_1)$.} Since we have chosen a conditional probability path directly rather than implicitly via an interpolant, it is more difficult to obtain the corresponding conditional vector field $u_t(\bfx \mid \bfx_1)$. Indeed, there is an infinite number of such fields that generate the desired evolution of $p_t$. Motivated by the basic form of the linear FM, we generalize it via the following \emph{ansatz}:
\begin{equation}\label{eq:vectorfield}
    u_t(\bfx \mid \bfx_1 = \bfe_i) = C(x_i, t) (\bfe_i -\bfx)
\end{equation}
That is, the flow still points directly towards the target vertex $\bfe_i$, but is rescaled by a $x_i$-dependent factor. The conditional vector field of linear FM satisfies this form with $C(x_i, t)=1/(1-t)$ dependent only on $t$; we introduce the additional $x_i$-dependence to control the contraction of probability mass towards $\bfe_i$. 
In Appendix \ref{appx:vf-derivation}, we derive the $C(x_i, t)$, which gives rise to Dirichlet probability paths to be:
\begin{equation}
    C(x_i, t) =  - \tilde I_{x_i}(t+1, K-1)\frac{\mathcal{B}(t+1, K-1)}{(1-x_i)^{K-1} x_i^t}
\end{equation}
where
\begin{equation}
    \tilde I_x(a, b) = \del{}{a}I_x(a, b)
\end{equation}
is a derivative of the \emph{regularized incomplete beta function} $I_x(a, b)$.
\begin{proposition}
The proposed vector field (Eq \ref{eq:vectorfield}) and probability path (Eq \ref{eq:probability-path}) together satisfy the transport equation (Eq. \ref{eq:continuity-equation}) and therefore constitute a valid flow-matching framework. (Proof in Appendix \ref{appx:vf-derivation}).
\end{proposition}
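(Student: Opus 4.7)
The plan is to verify the continuity equation $\partial_t p_t + \nabla \cdot (p_t u_t) = 0$ by exploiting the strong symmetry built into both the Dirichlet density and the ansatz. Without loss of generality, take $\bfx_1 = \bfe_1$ and write $y := x_1$. The conditional density then has the explicit form $p_t(\bfx \mid \bfe_1) = [\Gamma(K+t)/\Gamma(t+1)]\, y^t$, depending only on $y$, while the ansatz decomposes into a radial piece $\dot x_1 = C(y,t)(1-y)$ and tangential pieces $\dot x_j = -C(y,t)\, x_j$ for $j \ne 1$.

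My first step would be to show that this symmetry collapses the full $(K-1)$-dimensional continuity equation on the simplex to a one-dimensional equation in $y$. Along trajectories of $u_t$, a direct calculation gives $\tfrac{d}{dt}[x_j/(1-y)] = 0$ for $j \ne 1$, so the ratios $x_j/(1-y)$ are conserved. Consequently, if the conditional distribution of $(x_2,\ldots,x_K)$ given $y$ is uniform on the scaled $(K-2)$-simplex $\{\bfz \ge 0 : \sum_{j\ne 1} z_j = 1-y\}$, it remains so for all $t$. This conditional uniformity is exactly what the target $\Dir(\boldsymbol{1} + t\bfe_1)$ exhibits, since its remaining $\alpha_j$'s all equal one. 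Hence, verifying the full transport equation reduces to matching the Beta$(t+1, K-1)$ marginal of $y$.

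Next, I would determine $C$ by solving the one-dimensional marginal continuity equation $\partial_t \tilde p_t + \partial_y[\tilde p_t \cdot C(y,t)(1-y)] = 0$, where $\tilde p_t(y) = y^t(1-y)^{K-2}/\mathcal{B}(t+1, K-1)$. Integrating in $y$ from $0$ with the no-flux condition $J(0,t) = 0$ (justified since the density vanishes at $y=0$ for $t > 0$), one finds that the flux equals $-\partial_t F_t(y)$, where $F_t(y) = I_y(t+1, K-1)$ is the regularized incomplete beta CDF. By definition, $\partial_t F_t = \tilde I_y(t+1, K-1)$, so solving $J = \tilde p_t \cdot C \cdot (1-y)$ for $C$ immediately recovers the stated formula. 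A quick check that $\tilde I_1(t+1, K-1) = \partial_t(1) = 0$ confirms no flux at $y = 1$ as well, so total probability is conserved.

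The main obstacle is making the symmetry reduction rigorous: one must choose local coordinates on the simplex, keep track of Jacobian factors, and verify that the divergence operator genuinely splits along the $y$-direction and the orthogonal conditional simplex. Once this reduction is in hand, the remainder of the proof is a one-line manipulation of the incomplete beta function. I also anticipate that some care will be needed to justify interchanging differentiation in $t$ with integration in $y$ for $\tilde I_y$ near $y = 0$, but this should be routine given the smoothness of the integrand for any fixed $t > 0$ and $K \ge 2$.
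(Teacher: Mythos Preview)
Your proposal is correct and shares its core idea with the paper: both identify $C(y,t)$ via the flux/CDF argument, namely that the probability flux across the level set $\{x_i = y\}$ must equal $-\partial_t I_y(t+1,K-1)$, which immediately yields the stated formula. Where you diverge is in how you justify that the \emph{full} transport equation on $S_K$ holds. You propose a symmetry reduction: the ratios $x_j/(1-y)$ are conserved along trajectories, so the flow acts as the identity on the conditional $(K-2)$-simplex and the $(K-1)$-dimensional continuity equation collapses to the Beta$(t+1,K-1)$ marginal equation in $y$. The paper instead skips the reduction and verifies $\nabla\cdot(p_t u_t) = -\partial_t p_t$ by a direct divergence computation in the ambient simplex coordinates, using the product rule and the identity $\nabla\cdot(\bfe_i - \bfx) = -(K-1)$, after which two terms cancel and the remainder reduces to $\partial_t$ of the regularized incomplete beta integrand.

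Your route is more conceptual and explains \emph{why} the ansatz works (the flow respects the fibration of $S_K$ over $[0,1]$ by scaled sub-simplices), but as you correctly flag, it demands care with the Jacobian of the change of variables $(x_1,\ldots,x_{K-1}) \mapsto (y, x_2/(1-y),\ldots)$ and with the induced divergence on the simplex. The paper's brute-force verification sidesteps all of that geometry at the price of being less illuminating: it is a half-page of algebra but leaves no doubt and needs no coordinate bookkeeping. Either approach closes the argument; yours would benefit from one explicit line writing the simplex volume form as $dy \cdot (1-y)^{K-2}\, d\sigma$ (with $d\sigma$ the uniform measure on the unit $(K-2)$-simplex) to make the Jacobian tracking concrete.
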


Figure~\ref{fig:vf-magnitude} compares the magnitude of the resulting vector field with that of linear FM as a function of distance from the target vertex. As anticipated, the field vanishes both at the target vertex \emph{and} on the $(K-2)$-dimensional face directly opposite it instead of monotonically scaling with distance from the target vertex as in linear FM. This second condition means the probability density is never fully drawn away from the face and resolves the pathological behavior of linear FM. The resulting probability paths and vector fields are visualized on the simplex in Figure \ref{fig:figure1}; they are continuous and smooth, unlike in linear FM.


\begin{figure}
    \centering
    \includegraphics[width=\columnwidth]{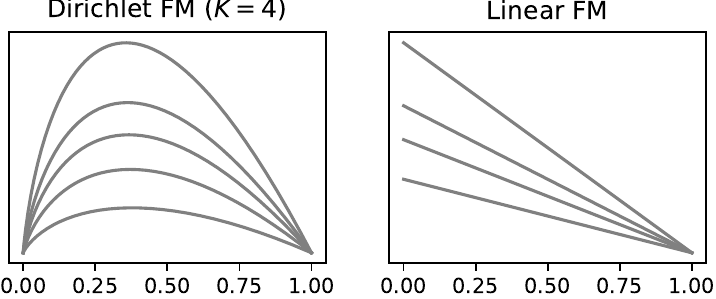}
    \vspace{-16pt}
    \caption{\textbf{Vector field magnitudes} of the conditional flow field $u_t(\bfx \mid \bfx_1 = \bfe_i)$ as a function of $x_i$ (the $i$th element of $\bfx$) plotted for varying values of $t$. In Dirichlet FM, the field vanishes at both $x_i=1$ (i.e., the target vertex) and $x_i=0$ (the opposite face).}
    \label{fig:vf-magnitude}
\end{figure}

\subsection{Guidance} \label{sec:method-guidance}
A key attribute of iterative generative models is the ability to continuously and gradually bias the generative process towards a class label with user-specified strength, a technique known as \emph{guidance} \citep{dhariwal2021diffusion, ho2022classifier}. Initially proposed in the context of diffusion models, where the generative process follows the \emph{score} $\hat s(\bfx, t;\theta) \approx \nabla_\bfx \log p_t(\bfx)$ of the noisy data distribution, guidance is implemented by taking a linear combination of the unconditional and conditional score models
\begin{equation}
    \hat s_\text{CFG}(\bfx, t, y;\theta) = \gamma \hat s(\bfx, t, y; \theta) + (1 - \gamma)\hat s(\bfx, t, \varnothing; \theta)
\end{equation} 
with $\gamma > 0$ and running the generative process with this adjusted score. In the context of flow matching, \citet{dao2023flow, zheng2023guided} derived a relationship between the score and marginal flow for certain Gaussian probability paths and implemented flow guidance by propagating the effects of standard score adjustments to the resulting flow fields. We implement guidance for Dirichlet FM by deriving a similar relationship between the score and flow field, detailed below. Note that when $\gamma = 1$, this adjustment precisely mimics the flow that would result from  training only on points with class label $y$; however, similar to prior works, we find $\gamma > 1$ futher enhances the guidance efficacy.



\textbf{Relationship between flow and score.} For the Dirichlet marginal probability path, the score can be obtained from the model posterior via the denoising score-matching identity \citep{song2019generative}:
\begin{equation}
    \hat s(\bfx, t; \theta) = \sum_{i=1}^K s_t(\bfx \mid \bfx_1 = \bfe_i)\hat p(\bfx_1 = \bfe_i \mid \bfx; \theta)
\end{equation}
We can differentiate Equation \ref{eq:probability-path} to obtain a matrix equation
\begin{equation}
    \hat s = \mathbf{D}\hat p, \quad \text{where} \quad \mathbf{D}_{ij} = \delta_{ij}\frac{t}{x_i}
\end{equation}
Here, $\mathbf{D}$ is a $K \times K$ diagonal matrix dependent on $\bfx, t$ and $\hat s, \hat p \in \mathbb{R}^n$. (Technically, $\hat s$ contains both on-simplex and off-simplex components, the latter of which is ignored.) Meanwhile, the computation of the marginal flow (Equation \ref{eq:marginal-vectorfield}) can also be written as a very similar matrix equation $\hat v = \mathbf{U}\hat p$ where the entries of $\mathbf{U}$ are given by Equation 15. Combining these, we obtain
\begin{equation}\label{eq:score-to-vf}
    \hat v = \mathbf{U}\mathbf{D}^{-1}\hat s
\end{equation}
where $\mathbf{D}$ is invertible since it is diagonal with nonnegative entries. Thus, a \emph{linear relationship} exists between the marginal flow and the score arising from the same model posterior.

\textbf{Classifier-free guidance.} Suppose we have class-conditional and unconditional flow models $\hat v(\bfx, t, y; \theta)$ and $\hat v(\bfx, t, \varnothing; \theta)$. Since a linear combination of scores results in a linear combination of flows, we similarly implement guidance by integrating 
\begin{equation}\label{eq:cfg}
    \hat v_\text{CFG}(\bfx, t, y;\theta) = \gamma \hat v(\bfx, t, y; \theta) + (1 - \gamma)\hat v(\bfx, t, \varnothing; \theta).
\end{equation}
In practice, we linearly combine the probabilities that correspond to the conditional and unconditional vector fields through the derived linear relationship. When using $\gamma > 1$, this can result in negative values in $\hat p$ and we project the vector back onto the simplex via the algorithm of \citet{wang2013projection}.

\textbf{Classifier guidance.} In cases where a conditional flow model is unavailable, we use the gradient of a noisy classifier to obtain a conditional score from an unconditional score:
\begin{equation}\label{eq:classifier-guidance}
    \hat s(\bfx, t, y; \theta) = \hat s(\bfx, t, \varnothing; \theta) + \nabla_\bfx \log \hat p(y \mid \bfx, t; \theta)
\end{equation}
The conditional scores can then be converted into a model posterior $\hat p$ and then a marginal flow $\hat v$ via Equation \ref{eq:score-to-vf}. However, a direct application is not possible because the classifier gradients do not have the appropriate off-simplex components to ensure a valid model posterior (i.e., $\hat p \in S_K$) when operated on by $\mathbf{D}^{-1}$. Instead, we modify Equation \ref{eq:score-to-vf} via
\begin{equation}
    \tilde {\mathbf{D}} = \left(I - \frac{1}{K}\boldsymbol{1}\boldsymbol{1}^T\right)\mathbf{D}
\end{equation}
corresponding to projecting the score onto the tangent plane of the simplex (i.e., $\boldsymbol{1}^T\hat s = 0$). As now $\mathbf{D}$ is no longer invertible, we obtain the conditional model posterior from the score by solving $\hat s = \tilde{\mathbf{D}}\hat p$ with the additional constraint $\boldsymbol{1}^T\hat p = 1$. This can result in negative values in $\hat p$ and we project to the simplex via the algorithm of \citet{wang2013projection}.

\subsection{Distillation}
The aim of \emph{distillation} \citep{salimans2022progressive, song23consistency, yin2023onestep} is to reduce the inference time of the iterative generative process by reducing the number of steps while retaining sample quality. However, for discrete diffusion models (see Section \ref{sec:discrete-diffusion}) or autoregressive language models, no distillation techniques exist, and it is unclear how to distill generative models based on discrete noise. For Dirichlet FM, inference is a deterministic ODE integration defining a map between the prior and target distribution. Hence, we can employ standard distillation approaches and distill the teacher model (using 100 steps in our experiments) into a student model representing the map. For this, we sample the teacher to obtain pairs of noise and training targets. We use these to train the student model to reproduce the teacher distribution in a single step. With this, we are able to reduce inference times and provide the first demonstration of distillation for flow matching and for iterative generative models of discrete data.

\section{Experiments} \label{sec:experiments}

\begin{figure}
    \centering
    \includegraphics[width=0.4\textwidth]{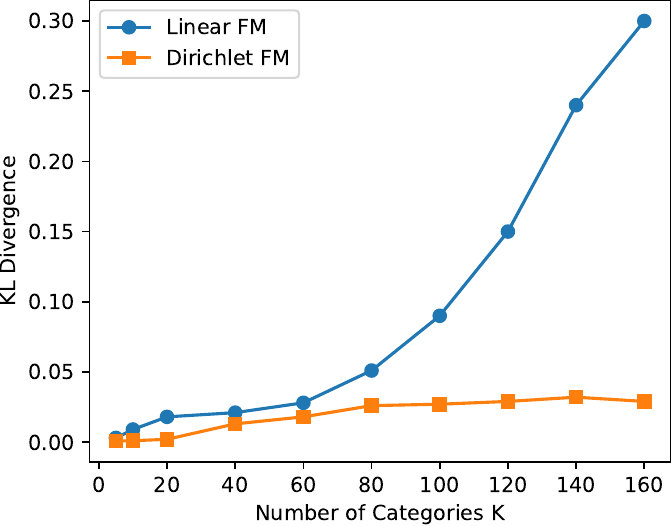}
    \caption{\textbf{Scaling to higher simplex dimensions.} We train on  simple categorical distributions with an increasing number of categories $K$ and measure the KL divergence between the generated distributions (512k samples) and the training target distribution. Dirichlet FM scales to larger $K$ much better than linear FM.}
    \label{fig:toy-experiment}
\end{figure}
\subsection{Simplex dimension toy experiment} \label{sec:toy-experiment}
We first evaluate Dirichlet FM and linear FM in a simple toy experiment where the KL divergence of the generated distribution to the target distribution can be evaluated under increasing simplex dimensions.  For this, we train both methods to reproduce a categorical distribution $q(\bfx)=\Cat(K, \mathbf{a})$ where the class probabilities $\mathbf{a}$ are sampled from a uniform Dirichlet  $\mathbf{a} \sim \Dir(\mathbf{a}; \boldsymbol{1})$. To evaluate, we use the KL-divergence KL$(\Tilde{p} \parallel q)$ where $\Tilde{p}$ is the empirical distribution of $512,000$ model samples. 

Figure \ref{fig:toy-experiment} shows these KL-divergences for increasing sizes $K$ of the categorical distribution.  Dirichlet FM is able to overfit on the simple distribution regardless of $K$. Meanwhile, Linear FM is unable to overfit on simple categorical distributions as $K$ increases, illustrating the practical implications of the pathological probability paths and discontinuous vector fields as discussed in Section \ref{sec:linear-fm}.

\subsection{Promoter DNA sequence design} \label{sec:promoter-experiments}
We next assess the ability of Dirichlet FM to design DNA promoter sequences conditioned on a desired \emph{promoter profile}. The experimental setup and evaluation closely follow that of DDSM \citep{avdeyev2023dirichlet}.

\textbf{Data.} We use a dataset of $100,000$ promoter sequences with $1,024$ base pairs extracted from a database of human promoters \citep{hon2017atlas}. Each sequence has a CAGE signal \citep{shiraki2003cage} annotation available from the FANTOM5 promoter atlas \citep{forrest2014promoter}, which indicates the likelihood of transcription initiation at each base pair ($\bfr \in \mathbb{R}^{1024}$). Sequences from chromosomes 8 and 9 are used as a test set, and the rest for training. 


\begin{table}[t]
\caption{\textbf{Evaluation of transcription profile conditioned promoter DNA sequence design.} Conditioned on a transcription profile, each method is tasked to generate a DNA sequence with that profile. The \textsc{MSE} is between the predicted regulatory activity of the designed sequence and the ground truth sequence. \textsc{NFE} refers to the number of function evaluations required for sampling. Numbers with * are from \citet{avdeyev2023dirichlet}.} \label{tab:promoter-design}
\begin{center}
\begin{small}
\begin{sc}
\begin{tabular}{lcc}
\toprule
Method & MSE & NFE\\
\midrule
Bit Diffusion (bit-encoding)*    & .0414 & 100\\
Bit Diffusion (one-hot encoding)* & .0395 & 100\\
D3PM-uniform*  & .0375 & 100\\
DDSM*  & .0334 & 100\\
Language Model & .0333 & 1024\\
\midrule
Linear FM & .0281 & 100\\
Dirichlet FM & \textbf{.0269} & 100\\
Dirichlet FM distilled & .0278 & \textbf{1}\\
\bottomrule
\end{tabular}
\end{sc}
\end{small}
\end{center}
\end{table}

\textbf{Task.} We train Dirichlet FM conditioned on a profile by providing it as additional input to the vector field. Following \citet{avdeyev2023dirichlet}, we evaluate generated sequences with the mean squared error (MSE) between their predicted regulatory activity and that of the original sequence corresponding to the input profile. The regulatory activity is determined by the promoter-related predictions of \textsc{Sei} \citep{chen2022sei}, a model trained on various regulatory signals.

\textbf{Baselines.} We compare Dirichlet FM with linear FM, discrete diffusion methods, and a language model that autoregressively generates the base pairs. The discrete diffusion baseline most related to our work is the simplex-based DDSM \cite{avdeyev2023dirichlet}. Our two other diffusion baselines are Bit Diffusion \citep{chen2023analog} and D3PM \citep{austin2021structured}. All methods use the same DNA modeling architecture and training protocol that was designed and tuned by \cite{avdeyev2023dirichlet} for DDSM. See Appendix \ref{appx:train-inference} for implementation details. 


\textbf{Results.} Dirichlet FM improves upon Linear FM, which are the only two methods that outperform the language model baseline (Table \ref{tab:promoter-design}).
The second best method in this comparison is the distilled version of Dirichlet FM, which retains almost the same performance. This means that our distilled Dirichlet FM outperforms all other methods in a single step, which is a $100\times$ speedup compared to the diffusion models and a $1,024\times$ speedup compared to the language model in terms of number of function evaluations (NFE).

\subsection{Enhancer DNA design}

We now assess the performance of Dirichlet FM on DNA enhancer sequences and design evaluations that quantify both unconditional and conditional sample quality. Implementation and architecture details are in Appendix \ref{appx:train-inference}.

\textbf{Data.} We evaluate on two enhancer sequence datasets from fly brain cells \citep{janssens2022deepflybrain} and from human melanoma cells \citep{atak2021deepmel2}. These contain 104k fly brain and 89k melanoma sequences of length 500 with cell class labels determined from ATAC-seq data \citep{buenrostro2013atacseq}. Overall, there are 81 such classes of cells in the fly brain data and 47 in melanoma data (see Appendix \ref{appx:data} for more data details). 

\begin{table}[t]
\caption{\textbf{Evaluation of unconditional enhancer generation.} Each method generates 10k sequences, and we compare their empirical distributions with the data distribution using our Fr{\'e}chet Biological distance (FBD) analogous to FID for image generative models. \textsc{NFE} refers to number of function evaluations. The \textsc{Random Sequence} baseline shows the FBD for the same number and length of sequences with uniform randomly chosen nucleotides. \textsc{Dirichlet FM dist.} refers to our one-step distilled model and \textsc{Dirichlet FM CFG} to classifier-free guidance towards all classes with their empirical frequencies. } \label{tab:enhancer-design}
\begin{center}
\begin{small}
\begin{sc}
\begin{tabular}{lccc}
\toprule
     & Melanoma & Fly Brain& \\
Method & FBD & FBD   & NFE \\
\midrule
Random Sequence & 622.8 & 876 & -- \\
Language Model & 36.0 & 25.2 & 500 \\
Linear FM & 19.6 & 15.0 & 100\\
\midrule
Dirichlet FM & 5.3 & 15.2 & 100\\
Dirichlet FM dist. & 6.1 & 15.8 & \textbf{1}\\
Dirichlet FM CFG & \textbf{1.9} & \textbf{1.0} & 200\\
\bottomrule
\end{tabular}
\end{sc}
\end{small}
\end{center}
\end{table}
\textbf{Metric.} To score the similarity between a data distribution and a generative model's distribution, we employ a metric similar to the Fr{\'e}chet inception distance (FID) that is commonly used to evaluate image generative models \citep{heusel2017fid} and was adapted to molecule generative models as Fr{\'e}chet ChemNet distance (FCD) \citep{Preuer2018}. We follow this established principle and call our metric Fr{\'e}chet Biological distance (FBD). Hence, we train a classifier model to predict cell types and use its hidden representations as embeddings of generated samples and data distribution samples. Then, the FBD is calculated as the Wasserstein distance between Gaussians fit to embeddings from the two distributions (10k each). 


\begin{figure}[t]
    \centering
    \includegraphics[width=0.49\textwidth]{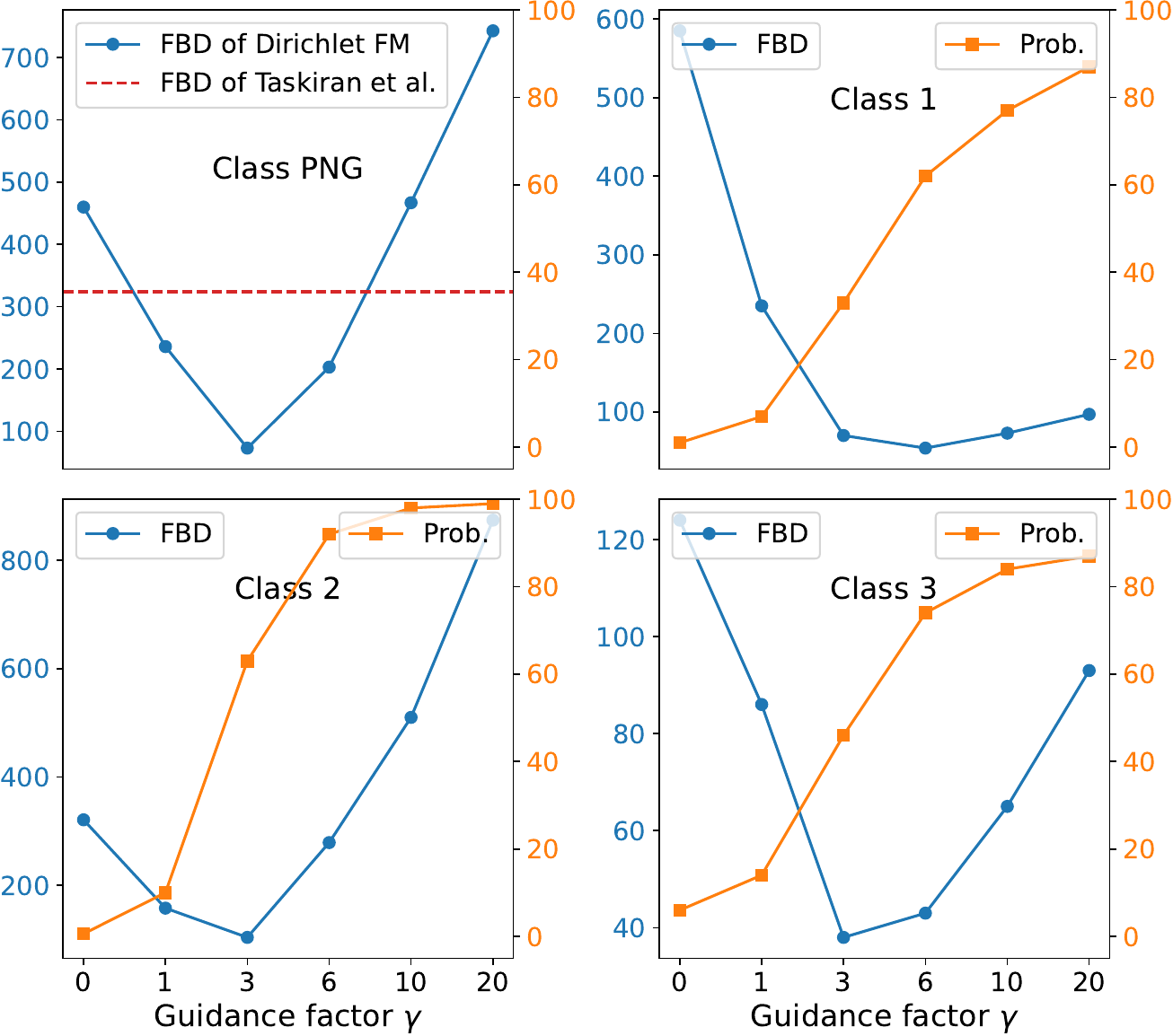}
    \vspace{-1.2\baselineskip}
    \caption{\textbf{Classifier-free guidance for cell type \emph{conditional} enhancer design.} We generate enhancers that are only active in cell class via classifier-free guidance with varying $\gamma$. Shown are 4 classes of the Fly Brain cell data. The \textcolor[HTML]{1f77b4}{left y-axis \textsc{FBD}} is computed between the generated sequences and the data distribution conditioned on the target class. For the first class,``PNG", functional sequences of \citet{Taskiran2023cell} are available, and we show their FBD.  The \textcolor[HTML]{ff7f0f}{right y-axis \textsc{Prob.}} refers to the target class probability of a classifier for the generated sequences in percent.}
    \label{fig:per-class-guidance}
\end{figure}
\textbf{Q1: How well can Dirichlet FM capture the sequence distribution?}
We compare with an autoregressive language model (the best baseline in the promoter design experiments in Section \ref{sec:promoter-experiments}) and with Linear FM. To evaluate, we calculate the FBD between the models' generated sequences and the unconditional data distribution. Dirichlet FM outperforms the language model by a large margin on both datasets and linear FM for human melanoma cell enhancer generation  (Table \ref{tab:enhancer-design}). Moreover, distillation minimally impacts FBD while speeding up inference by \emph{3 orders} of magnitude compared to the language model and 2 to Dirichlet FM (distilled Dirichlet FM only requires 1 step). Such speedups, compared to autoregressive models, make Dirichlet FM a promising direction for other applications with high sequence lengths where inference times are important.


\textbf{Q2: Can guided Dirichlet FM produce class-specific sequences and improve upon the state-of-the-art?}
We condition Dirichlet FM on different target cell-type classes via guidance (Section \ref{sec:method-guidance}). To quantify how well the generated sequences match the target class distribution, we use the FBD between the generated distribution and the data distribution \emph{conditioned} on the target class. Additionally, we train a separate cell-type classifier and evaluate the probability it assigns to the target class for a generated sequence. 

Sequences of classifier-free guided Dirichlet FM for the cell-type perineurial glia (PNG) have better FBD (Figure \ref{fig:per-class-guidance}) than the sequences of \citet{Taskiran2023cell}, of which several were \emph{experimentally validated} as functioning enhancers (we show this comparison only for the PNG class since their sequences are available for it). For the other classes, guidance is similarly effective; by increasing the guidance factor $\gamma$, the classifier probability for generated sequences to belong to the target class can be improved close to 100\%, and the FBD improves significantly until reaching a minimum (after the minimum the diversity decreases and the FBD worsens). The improvements with classifier guidance (Appendix Figure \ref{fig:per-class-guidance-cls}) are still significant but smaller.

\begin{figure}
    \centering
    \includegraphics[width=0.4\textwidth]{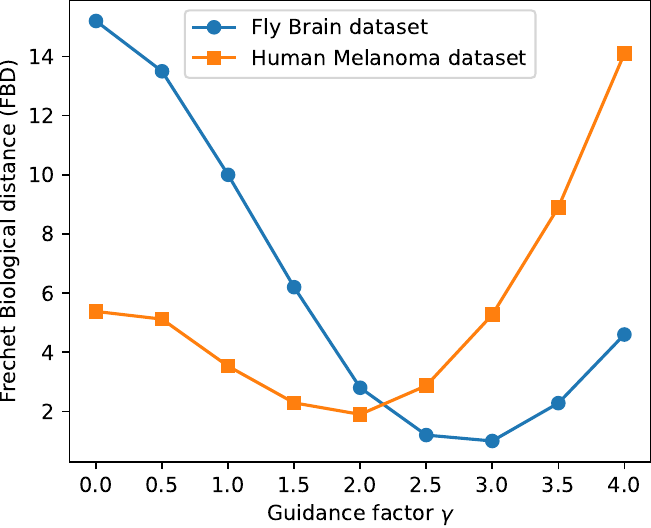}
    \vspace{-0.5\baselineskip}
    \caption{\textbf{Classifier-free guidance for \emph{unconditional} enhancer generation.} We generate unconditional sequences with classifier-free guidance by sampling the target class based on its empirical frequency. We show the FBD between the generated data and the full data distribution for varying levels of guidance $\gamma$. }
    \label{fig:guidance-weigth}
\end{figure}

\textbf{Q3: Can guidance improve \emph{unconditional} generation?} We generate unconditional sequences with classifier-free guided Dirichlet FM by first sampling a class (based on its empirical frequency) and then guiding toward that class. This significantly improves sample quality compared to unguided Dirichlet FM (Figure \ref{fig:guidance-weigth}) and baselines (Table \ref{tab:enhancer-design}). Thus, guided Dirichlet FM via the connection we derive between flow and score reproduces the success in image diffusion models of enhancing sample quality via conditioning \citep{rombach2022high, saharia2022photorealistic}. Furthermore, like guidance for images \citep{xu2023restart}, the FBD first improves under increased guidance (Figure \ref{fig:guidance-weigth}) until reaching a minimum, after which increased guidance deteriorates sample diversity and, therefore, FBD.

\section{Conclusion}

We presented Dirichlet flow matching for modeling discrete data via a generative process on the simplex. This solves the pathological behavior of linear flow matching on the simplex, which we identified. Compared with autoregressive methods or diffusion with discrete noise, Dirichlet FM enables distillation and conditional generation via guidance, which we derived via a connection between the marginal flow and the scores of a mixture of Dirichlets.

Experimental results on important regulatory DNA sequence design tasks across 3 datasets demonstrate Dirichlet FM's effectiveness and utility for hard generative modeling tasks over long sequences. The results confirm Dirichlet FM's superiority to linear FM and multiple discrete diffusion approaches. Distilling Dirichlet FM only marginally impacts performance while enabling one-step generation, leading to multiple orders of magnitude speedups compared to autoregressive generative models for long sequences. Lastly, we demonstrated effective class conditional generation via guided Dirichlet FM to design cell-type specific enhancers - an important task for gene therapies. Hence, Dirichlet FM is a flexible framework (guidance, distillation) with strong performance for biological sequence generation and a promising direction for further discrete data applications.

\section*{Acknowledgements}
We thank Andrew Campbell, Yaron Lipman, Felix Faltings, Jason Yim, Ruochi Zhang, Rachel Wu, Jason Buenrostro, Bernardo Almeida, Gokcen Eraslan, Ibrahim I. Taskiran, and Pavel Avdeyev for helpful discussions.

This work was supported by the NSF Expeditions grant (award 1918839: Collaborative Research: Understanding the World Through Code), the Machine Learning for Pharmaceutical Discovery and Synthesis (MLPDS) consortium, the Abdul Latif Jameel Clinic for Machine Learning in Health, the DTRA Discovery of Medical Countermeasures Against New and Emerging (DOMANE) Threats program, the DARPA Accelerated Molecular Discovery program, the NSF AI Institute CCF-2112665, the NSF Award 2134795, the GIST-MIT Research Collaboration grant, the National Institute of General Medical Sciences of the National Institutes of Health under award number 1R35GM141861-01 and the U.S. Department of Energy, Office of Science, Office of Advanced Scientific Computing Research, Department of Energy Computational Science Graduate Fellowship under Award Number DE-SC0022158.

\section*{Impact Statement}
We present Dirichlet flow matching, an iterative generative modeling process for discrete data, and employ it for controllable sequence generation. For this general framework, there are many possible societal impacts, none of which need specific highlighting. We apply our method to regulatory DNA design. Current societal impacts of this application are overwhelmingly positive, with use cases in discovering biological regulatory mechanisms, drug development, gene therapy, and other therapies.

\bibliography{references}
\bibliographystyle{icml2024}

\newpage
\appendix
\onecolumn

\section{Method Details} \label{appx:method-details}

\paragraph{Flow Matching with Cross-Entropy Loss} For all $t$ and $\bfx \in S_K$, at convergence our denoising classifier satisfies
\begin{equation}
    \hat p(\bfx_1 \mid \bfx) = \frac{p_t(\bfx \mid \bfx_1) p_\text{data}(\bfx_1)}{p_t(\bfx)}
\end{equation}
Thus, if we parameterize the vector field via Equation \ref{eq:marginal-vectorfield}, then we are assured that
\begin{align}
    \hat v(\bfx, t;\theta) =  \sum_{i=1}^K u_t(\bfx \mid \bfx_1 = \bfe_i) \frac{p_t(\bfx \mid \bfx_1) p_\text{data}(\bfx_1)}{p_t(\bfx)} \quad = v(\bfx, t;\theta)
\end{align}

\paragraph{Proof of Proposition 1}
\begin{prop}
    Suppose that a flow matching model is trained with the linear flow map (Equation \ref{eq:linear-flow-map}). Then, for all $k = 2, \ldots K$ and $\bfx \sim p_t(\bfx)$, the converged model posterior $p(\bfx_1 \mid \bfx) \propto p_t(\bfx \mid \bfx_1)p_\text{data}(\bfx_1)$ has support over at most $k-1$ vertices for times $t > 1/k$.
\end{prop}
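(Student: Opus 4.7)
My plan is to characterize the support of each conditional probability path explicitly, translate the condition ``$\bfe_i$ lies in the posterior support'' into a simple inequality on the coordinates of $\bfx$, and then use the simplex constraint $\sum_j x_j = 1$ to bound the number of indices $i$ that can satisfy this inequality simultaneously.

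First, I would unpack the linear interpolant. Since $p_t(\bfx \mid \bfx_1 = \bfe_i)$ is the pushforward of the uniform prior $q_0$ on $S_K$ under the affine map $\bfx_0 \mapsto (1-t)\bfx_0 + t\bfe_i$, its support is exactly
\begin{equation}
    \mathrm{supp}\, p_t(\cdot \mid \bfe_i) \;=\; \{(1-t)\bfy + t\bfe_i : \bfy \in S_K\}.
\end{equation}
A point $\bfx \in S_K$ lies in this set iff $(\bfx - t\bfe_i)/(1-t) \in S_K$. The sum-to-one condition is automatic, so the only nontrivial constraint is nonnegativity: $x_i \ge t$ and $x_j \ge 0$ for $j \ne i$. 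Since $\bfx \in S_K$ already gives $x_j \ge 0$, the support condition reduces cleanly to $x_i \ge t$.

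Next I would transfer this to the converged posterior. Because $p_\text{data}$ is supported on the vertices, $p(\bfx_1 = \bfe_i \mid \bfx) \propto p_t(\bfx \mid \bfe_i)\, p_i$, so $\bfe_i$ lies in the support of the posterior at $\bfx$ iff $p_i > 0$ and $x_i \ge t$ (we may assume all $p_i > 0$ for a worst-case bound; otherwise the support is only smaller).

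Finally I would count. Let $m = \#\{i : x_i \ge t\}$. Summing the simplex constraint gives
\begin{equation}
    1 \;=\; \sum_{i=1}^K x_i \;\ge\; \sum_{i : x_i \ge t} x_i \;\ge\; m\,t,
\end{equation}
so $m \le 1/t$. When $t > 1/k$, this yields $m < k$, i.e.\ $m \le k-1$, which is exactly the claimed bound. The argument holds for every $\bfx \in S_K$, hence in particular almost surely for $\bfx \sim p_t$.

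I do not anticipate a real obstacle here; the only subtle point is making sure that ``converged model posterior'' is interpreted as Bayes' rule against the vertex-supported $p_\text{data}$, so that the support of the posterior is controlled exactly by whether each conditional path $p_t(\cdot \mid \bfe_i)$ covers $\bfx$. Once that is noted, the proof is the two-line counting argument above.
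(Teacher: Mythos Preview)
Your proposal is correct and follows essentially the same approach as the paper: both reduce the support condition $p_t(\bfx \mid \bfe_i) > 0$ to the inequality $x_i \ge t$ and then use the simplex constraint $\sum_i x_i = 1$ to bound the number of indices satisfying it. The only cosmetic difference is that the paper phrases the final step as a proof by contradiction, whereas you give the equivalent direct counting argument $m \le 1/t < k$.
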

\begin{proof}
    In linear flow matching, the explicit form of the conditional probability path is given by \begin{equation}\label{eq:linear-prob-path}
        p_t(\bfx \mid \bfx_1 = \bfe_i) = \begin{cases} [(1-t)^{K-1}\Gamma(K)]^{-1} & x_i \ge t \\ 0 & x_i < t\end{cases}
    \end{equation}
    Suppose for sake of contradiction that $t > 1/k$ but $p_t(\bfx \mid \bfx_1)p_\text{data}(\bfx_1) \neq 0$ for $k' \ge k$ values of $\bfx_1$. Without loss of generality, suppose that $\bfe_1, \ldots \bfe_k$ are $k$ of those values. Then, by Equation \ref{eq:linear-prob-path}, we have $x_i \ge t$ for $i = 1, \ldots k$. Then
    \begin{align}
        \boldsymbol{1}^T\bfx = \sum_{i=1}^K x_i \ge \sum_{i=1}^k x_i \ge kt > k \cdot \frac{1}{k} = 1
    \end{align}
    This contradicts the fact that $\bfx$ must lie on the simplex.
\end{proof}
\allowdisplaybreaks 
\subsection{Dirichlet Conditional Vector Field}\label{appx:vf-derivation}

As preliminaries, we recall the definition of the multivariate beta function:
\begin{equation}\label{eq:beta-function}
    \mathcal{B}(\alpha_1, \ldots \alpha_K) = \frac{\prod_{i=1}^K \Gamma(\alpha_i)}{\Gamma\left(\sum_{i=1}^K \alpha_i\right)}  
\end{equation}
and that of the incomplete (two-argument) beta function:
\begin{equation}
    \mathcal{B}(x; a, b) = \int_0^x t^{a-1}(1-t)^{b-1} \, dt
\end{equation}
with the identity $\mathcal{B}(a,b) = \mathcal{B}(1; a, b)$.

We wish to construct a conditional flow $u_t(\bfx \mid \bfx_1 = \bfe_i)$ which generates the evolution of the conditional probability path
\begin{equation}
    p_t(\bfx \mid \bfx_1 = \bfe_i) = \Dir(\bfx; 1, \ldots \alpha_i = 1 + t, \ldots 1) = \frac{\Gamma(t+K)}{\Gamma(t+1)} x_i^{t}
\end{equation}
We choose the following \emph{ansatz} for the functional form of $u_t$:
\begin{equation}
    u_t(\bfx \mid \bfx_1 = \bfe_i) = C(x_i, t) (\bfe_i -\bfx)
\end{equation}
i.e., (1) the flow points towards the target vertex $\bfe_i$ and (2) the magnitude is scaled by a constant dependent only on $x_i$ and $t$. Then consider the $(K-2)$-dimensional hyperplane $A_b$ of constant $x_i = b$ which cuts through the simplex, separating it into two regions $S_1, S_2$ (Figure~\ref{fig:derivation}). We make the following key observation:


\begin{center}
\emph{The probability flux crossing the plane $A_b$ is equal to the the rate of change of the total probability of region $S_1$.}
\end{center}
Thus, we solve for the constant $C(x_i, t)$ in the \emph{ansatz} by deriving these two quantities and setting them equal to each other.

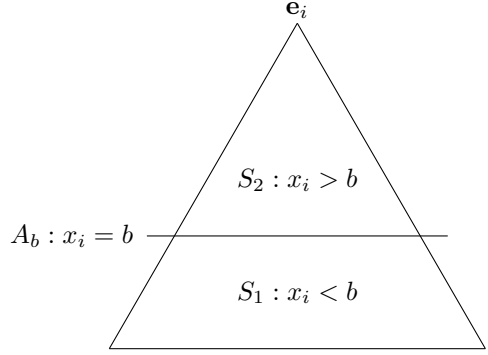
\begin{figure}
    \centering
    \begin{tikzpicture}
    \draw (-2.5, 0) -- (2.5, 0);
    \draw (-2.5, 0) -- (0, 4.33);
    \draw (2.5, 0) -- (0, 4.33);
    \node at (0, 4.5) {$\bfe_i$};
    \draw (-2, 1.5) -- (2, 1.5);
    \node at (-3, 1.5) {$A_b: x_i = b$};
    \node at (0, 0.75) {$S_1: x_i < b$};
    \node at (0, 2.25) {$S_2: x_i > b$};
    \end{tikzpicture}
    \caption{Conceptual derivation of the conditional vector field} 
    \label{fig:derivation}
\end{figure}

\textbf{Q1: What is the probability mass of $S_1$ and its rate of change?}

The probability mass of $S_1$ can be obtained by integrating the density over each hyperplane $A_c$ (defined by $x_i=c$) for $c < b$ and then integrating over all such hyperplanes. Since the density is a constant proportional to $x_i^t = c^t$ over each hyperplane, we obtain
\begin{equation}
    p_t(S_1) \propto \int_0^b c^t \cdot \Vol(A_c) \, dc 
\end{equation}

where $\Vol(A_c)$ refers to the volume of the intersection between the hyperplane and the simplex. Since this region is defined by $x_i = c$, the remaining entries of $\bfx$ must add up to $1-c$. Thus, $A_c$ can be viewed as a nonstandard probability simplex over $K-1$ variables. The volume of this $(K-2)$-dimensional space is proportional to $(1-c)^{K-2}$, giving
\begin{equation}
    \rho(S_1) \propto \int_0^b c^t (1-c)^{K-2} \, dc = \mathcal{B}(b; t+1, K-1)
\end{equation}
Normalizing, we obtain
\begin{equation}
    p_t(S_1) = \frac{\int_0^b c^t(1-c)^{K-2} \, dc}{\int_0^1 c^t(1-c)^{K-2} \, dc} = \frac{\mathcal{B}(b; t+1, K-1)}{\mathcal{B}(t+1, K-1)} = I_b(t+1, K-1)
\end{equation}
where $I$ is the so-called \emph{regularized incomplete beta function} and is well-known as the CDF of the Beta distribution. Its derivative with respect to the first parameter is not available in closed form, but we write it as
\begin{equation}\label{eq:def_itilde}
    \tilde I_x(a, b) = \del{}{a}I_x(a, b) = \del{}{a}\frac{\mathcal{B}(x; a, b)}{\mathcal{B}(a, b)}
\end{equation}
and thus obtain the rate of change of $p_t(S_1)$ as $\tilde I_b(t+1, K-1)$.

\textbf{Q2: What is the probability flux across the hyperplane $A_b$?} The probability flux across $A_b$ (into $S_2$) is given by
\begin{equation}
    J = \iint_{A_b} u_tp_t \cdot \mathbf{\hat{n}} \, dA = \iint_{A_b} p_t \cdot C(b, t)  (\bfe_i - \bfx) \cdot \frac{\mathbf{n}}{\lVert \mathbf{n}\rVert} \, dA
\end{equation}
The normal vector $\mathbf{n}$ points from the center of the face opposite $\bfe_i$, specified by $x_i = 0$ and $x_j = 1/(K-1), j\neq i$, towards $\bfe_i$. Thus, the probability flux density is given by the dot product of
\begin{equation}
\begin{aligned}
    \bfe_i - \bfx &= (-x_1, \ldots \bfx_i = 1-b, \ldots -x_K) \\
    \mathbf{n} &= \left(-\frac{1}{K-1}, \ldots \bfx_i = 1, \ldots -\frac{1}{K-1}\right)
\end{aligned}
\end{equation}
Since $\sum_{j\neq i} x_j = 1-b$, this dot product is equal to $(1-b)K/(K-1)$. We also see that $\lVert \mathbf{n} \rVert = \sqrt{K/(K-1)}$. Importantly, the flux density is constant on the hyperplane $A$, so the total flux is given by a simple product:
\begin{align}
    J &= C(b, t) \cdot p_t \cdot  (1-b)\sqrt{\frac{K}{K-1}} \Vol(A_b) \\
    &= C(b, t) \cdot \left[\frac{1}{\sqrt{K}} \frac{\Gamma(t+K)}{\Gamma(t+1)} b^{t} \right] (1-b)\sqrt{\frac{K}{K-1}} \frac{\sqrt{K-1}}{\Gamma(K-1)}(1-b)^{K-2} \\
    &= C(b, t) \cdot \frac{(1-b)^{K-1}b^t}{\mathcal{B}(t+1, K-1)}
\end{align}
where (for this step only) we note that the volume of a simplex over $K$ variables \emph{viewed as a subset} of $\mathbb{R}^K$ is $\sqrt{K}/\Gamma(K)$, giving the Dirichlet PDF an additional factor of $1/\sqrt{K}$. 
Now setting $J = -\partial_t p_t(S_1)$ from above, we obtain
\begin{align}
    C(b, t) = -\tilde I_b(t+1, K-1) \frac{\mathcal{B}(t+1, K-1)}{(1-b)^{K-1}b^t}
\end{align}

\textbf{Checking the transport equation.} 
We check that $\nabla \cdot(p_t u_t) = -\partial p_t / \partial t$:
\begin{align}
    \nabla \cdot (p_tu_t) &= - \nabla \cdot \left[ \frac{\Gamma(t+K)}{\Gamma(t+1)} x_i^{t}  \cdot \tilde I_{x_i}(t+1, K-1)\frac{\mathcal{B}(t+1, K-1)}{(1-x_i)^{K-1} x_i^t}(\bfe_i - \bfx)\right] \\
    &= - \Gamma(K-1) \nabla \cdot \left[ \frac{ \tilde I_{x_i}(t+1, K-1)}{(1-x_i)^{K-1} }(\bfe_i - \bfx)\right] \\
    &= - \Gamma(K-1) \left[  \frac{ \tilde I_{x_i}(t+1, K-1)}{(1-x_i)^{K-1} } \nabla \cdot (\bfe_i - \bfx) + (\bfe_i - \bfx) \cdot \nabla \frac{ \tilde I_{x_i}(t+1, K-1)}{(1-x_i)^{K-1} }  \right]\\
    \intertext{The divergence of $\bfx$ on the $(K-1)$-dimensional space $S_K$ is $K-1$. Also, the gradient of a function dependently only on $x_i$ has nonzero component only in the $x_i$ direction.}
    &= - \Gamma(K-1) \left[\frac{ \tilde I_{x_i}(t+1, K-1)}{(1-x_i)^{K-1} }(1-K)  + (1-x_i) \del{}{x_i}\frac{ \tilde I_{x_i}(t+1, K-1)}{(1-x_i)^{K-1} }  \right]\\
    &= - \Gamma(K-1) \left[\frac{ \tilde I_{x_i}(t+1, K-1)}{(1-x_i)^{K-1} }(1-K)  + (1-x_i)\left( \tilde I_{x_i} \frac{K-1}{(1-x_i)^K} + \frac{1}{(1-x_i)^{K-1}}\del{\tilde I_{x_i}}{x_i}  \right)\right]\\
    \intertext{The first and second terms now cancel.}
    &= - \Gamma(K-1) \left[ \frac{1}{(1-x_i)^{K-2}}\del{}{x_i}\tilde I_{x_i}(t+1, K-1)  \right]\\
    \intertext{Substituting Equation~\ref{eq:def_itilde} and interchanging the order of derivatives,}
    &= - \Gamma(K-1) \left[ \frac{1}{(1-x_i)^{K-2}}\del{}{t} \frac{1}{\mathcal{B}(t+1, K-1)}\del{}{x_i}\mathcal{B}(x_i; t+1, K-1) \right]\\
    &= - \frac{1}{(1-x_i)^{K-2}}\del{}{t} \frac{ \Gamma(K-1) }{\mathcal{B}(t+1, K-1)}x_i^t (1-x_i)^{K-2} \\
    &= -\del{}{t} \frac{ \Gamma(t+K) }{\Gamma(t+1)}x_i^t \quad = -\del{p_t}{t}
\end{align}

\section{Experimental details}

\subsection{Training and Inference}\label{appx:train-inference}

\begin{algorithm}[H]
\caption{\textsc{Training.}}\label{alg:training}
\begin{algorithmic}
\STATE \textbf{Input:} $N$ training sequences of one-hot vectors $\{\bfx^{(n)}\}$ for $n \in \{1\dots N\}$, neural network $\phi$ that takes a sequence as input and predicts probabilities for each sequence position.
\FORALL{$\bfx^{(n)}$}
\STATE Sample $t \sim Exp(1)$ 
\FORALL{sequence position $l \gets 0$ to $L - 1$}
\STATE Sample noisy $\tilde \bfx^{(n)}_l \sim p_t(\tilde \bfx^{(n)}_l \mid \bfx = \bfx^{(n)}_l)$ where $p_t(\tilde \bfx^{(n)}_l \mid \bfx = \bfx^{(n)}_l) = \Dir(\tilde \bfx^{(n)}_l; \boldsymbol{\alpha} = \boldsymbol{1} + t\cdot\bfx^{(n)}_l)$ 
\ENDFOR
\STATE Predict $\hat \bfp \gets \phi(\tilde \bfx^{(n)}, t)$ 
\STATE  Optimize loss $\mathcal{L} = \frac{1}{L}\sum_{l=0}^L\CrossEntropy(\hat \bfp_l, \bfx^{(n)}_l)$ 
\ENDFOR
\end{algorithmic}
\end{algorithm}

\begin{algorithm}[H]
\caption{\textsc{Inference.}}\label{alg:inference}
\begin{algorithmic}
\STATE \textbf{Input:} Trained neural network $\phi$. Number of integration steps $N$. Sequence length $L$. Maximum integration time $t_{max}$.
\STATE Sample $\bfx_0$ to be a sequence of $L$ probability vectors with each probability vector $\bfx_{0,l}$ sampled as $ \bfx_{0,l}\sim \Dir(\bfx; \boldsymbol{1}$)
\STATE Set $\bfx_t \gets \bfx_0$
\FOR{$n \gets 0$ to $N - 1$}
\STATE    Let $t \gets t_{\max} \cdot n / N$\;
\STATE    Predict $\hat \bfp \gets \phi(\bfx_t, t)$ \;
\FOR{sequence position $l \gets 0$ to $L - 1$}
\STATE    With $\bfx_{t,l,k}$ being the $k$-th entry of the probability vector $\bfx_{t,l}$: 
\FOR{simplex dimension $k \gets 0$ to $K - 1$}
\STATE Calculate conditional vector field $u_t(\bfx_{t,l} \mid \bfx_{1} = \bfe_k) = -\tilde I_{\bfx_{t,l,k}}(t+1, K-1)\frac{\mathcal{B}(t+1, K-1)}{(1-\bfx_{t,l,k})^{K-1} \bfx_{t,l,k}}(\bfe_k - \bfx_{t,l})$\;
\ENDFOR
\STATE     Calculate marginal vector field $u_t,(\bfx_{t,l}) = \sum_{k=0}^K \hat \bfp_{l,k} u_t(\bfx_{t,l} \mid \bfx_{1} = \bfe_k)$\;
\STATE    Take step $\bfx_{t,l} \gets \bfx_{t,l} + \frac{t_{\max}}{N} \cdot u_t(\bfx_{t,l})$ \;
\ENDFOR
\ENDFOR
\STATE \textbf{return} $\bfx_t$
\end{algorithmic}
\end{algorithm}

We provide algorithms for training and inference in Algorithm \ref{alg:training}, and Algorithm \ref{alg:inference}. These describe the procedures for generating discrete data on a single simplex. When generating a discrete data sequence, they are extended easily by performing the same procedure for a sequence of simplices. Code is at \url{https://github.com/HannesStark/dirichlet-flow-matching}.

\textbf{Toy experiments.} We train all models in Figure \ref{fig:toy-experiment} for 450,000 steps with a batch size of 512 to ensure that they have all converged and then evaluate the KL of the final step.

\textbf{Promoter Design.} We follow the setup of \citet{avdeyev2023dirichlet} and train for 200 epochs with a learning rate of $5\times 10^{-4}$ and early stopping on the MSE on the validation set. We communicated with \citet{avdeyev2023dirichlet} to ensure that we have the same training and inference setup as them, and we build on their codebase to evaluate the generated sequences with the Sei regulatory activity prediction model \citep{chen2022sei}. Thus, we use 100 inference steps for our Dirichlet FM instead of the 400 that they use in their code since they state that they used 100 integration steps for the results in the paper, which is also stated in the paper. Under this setup, we obtain the performance numbers for Linear FM and Dirichlet FM. We also ran the autoregressive language model in this setup (except for the number of inference steps which does not apply). Meanwhile, the numbers we report for Bit Diffusion, D3PM, and DDSM are taken from the DDSM paper \citep{avdeyev2023dirichlet}.

\textbf{Enhancer Design.} For both evaluations, on the human melanoma cell and the fly brain cell dataset, we train for 800 epochs (convergence of validation curves is reached after approximately 300 for both datasets). We use FBD for early stopping. For inference, we use 100 integration steps.

\textbf{Classifier-free Guidance.} For classifier-free guidance, we train with a conditioning ratio (the fraction of times we train with a class label as input instead of the no-class token as input) of 0.7. During inference, the inference Algorithm \ref{alg:inference} is changed in that two probability vectors $\hat \bfp$ are predicted, once with class conditioning and once without class conditioning, which we sum together according to Equation \ref{eq:cfg}. Then, we project the resulting probabilities onto the simplex since negative values can arise for guidance factors $\gamma>1$. For this purpose, we use the algorithm by \citet{wang2013projection}. 

\textbf{Classifier Guidance.} The noisy classifier that we train for classifier guidance has the same architecture as our generative model, except that we sum the final representations and feed them into a 2-layer feed-forward network that serves as classification head. For training, we use early stopping on the accuracy and train for approximately 800 epochs. During inference, we use automatic differentiation to obtain the classifier's gradients with respect to the input points on the simplices. To perform classifier guidance, we then convert the flow model output probabilities to scores as described in Section \ref{sec:method-guidance}, obtain the guided score by adding the unconditional and the conditional scores via Equation \ref{eq:classifier-guidance}, and convert the obtained scores back to probabilities. These we project to the simplex \citep{wang2013projection} from which we obtain the vector field for integration. 

\textbf{Classifier for FBD calculation.} This classifier's architecture is similar to that of the noisy classifier for classifier guidance. However, it does not have any time conditioning and takes token embeddings as input instead of points on the simplex. For training, we use early stopping on the accuracy and train approximately 100 epochs.

The sequence embeddings that we use to calculate FBD are given by the hidden features after the first layer of the classification head. The 4 classes that we choose for the cell type specific enhancer generation are chosen as classes with a good tradeoff between the area under the receiver operator characteristic curve and the area under the precision-recall curve, and er chose the class PNG for which \citet{Taskiran2023cell} had data available. 

\textbf{Distillation.} For distillation, we run inference with the teacher model for every training step of the student model to obtain pairs of noise and training targets. With this, we train the student model on approximately 6 billion sequences. 

\textbf{Computational requirements.} We train on RTX A600 GPUs. Training in the enhancer generation setup for 200 epochs on sequences with length 500 takes 7 hours. Our largest model uses approximately 8GB of RAM during training.

\textbf{Architecture}
The architecture that we use for the promoter design experiments is the same as in DDSM \citep{avdeyev2023dirichlet}. In our other experiments we replaced group norm with layer norm in their architecture. The model consists of 20 layers of 1D convolutions interleaved with time embedding layers (and class-type embedding layers for classifier-free guidance) and normalization layers. We also experimented with Transformer architectures, which led to worse performance. All models use this 20-layer architecture except for the classifier for the Fly Brain data, which has 5 layers.

\subsection{Data} \label{appx:data}
\begin{figure}
    \centering
    \includegraphics[width=0.45\textwidth]{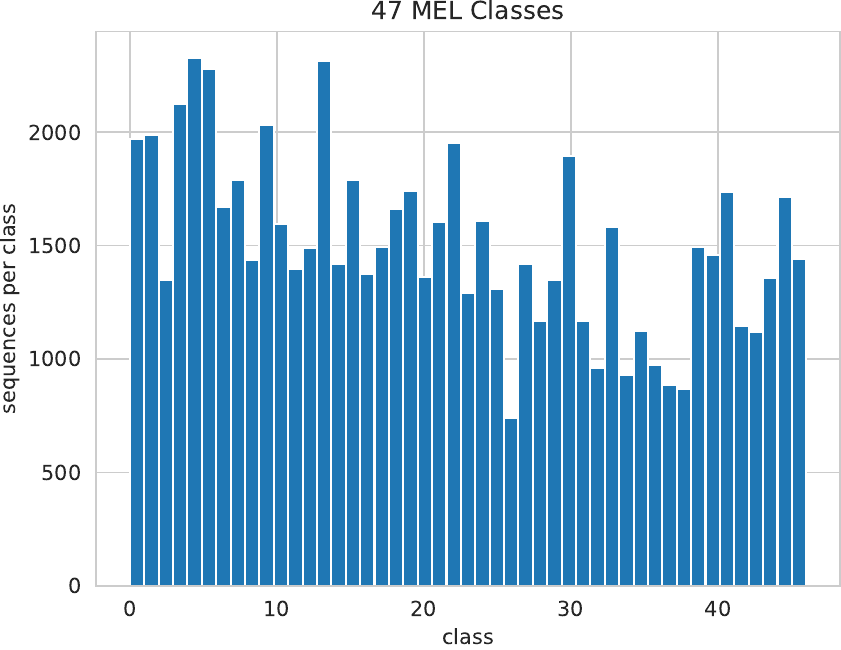}
    \includegraphics[width=0.45\textwidth]{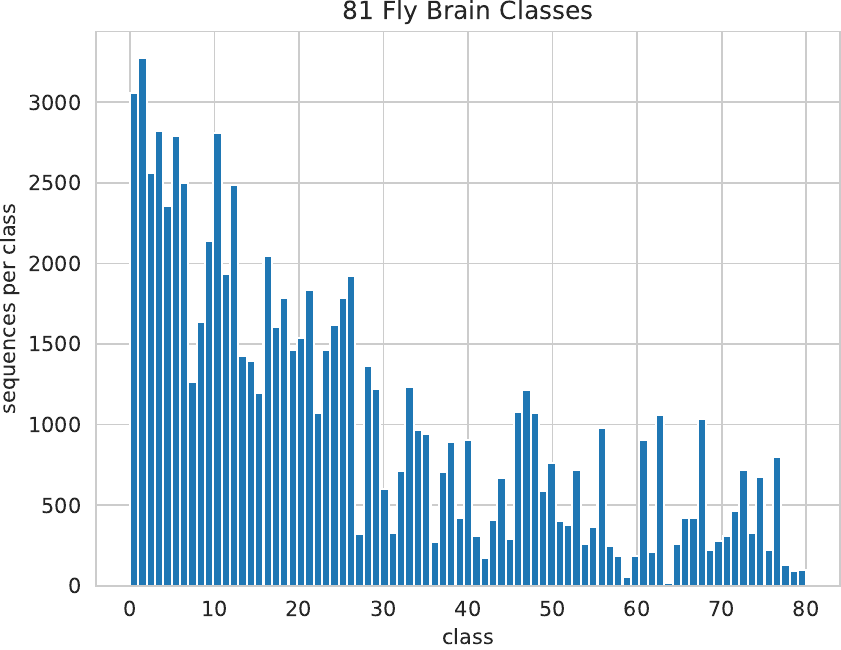}
    \caption{{Histograms of cell type classes for fly brain cell enhancer sequence data and human melanoma cell data.}}
    \label{fig:class-hist}
\end{figure}
For the enhancer data of 104665 fly brain cell sequences \citep{janssens2022deepflybrain}, we use the same split as \citet{Taskiran2023cell}, resulting in an 83726/10505/10434 split for train/val/test. Meanwhile, for the human melanoma cell dataset of 88870 sequences \citep{atak2021deepmel2}, their split has 70892/8966/9012 sequences. It is noteworthy that these datasets contain ATAC-seq data \citep{buenrostro2013atacseq}, which means that not all sequences are guaranteed to be enhancers and actually enhance transcription of a certain gene. ATAC-seq only measures the chromatin accessibility of the sequences in the cell types, which is a necessary but not sufficient requirement for a sequence to be an enhancer. In Figure \ref{fig:class-hist}, we show histograms for the class distributions of both datasets. For the melanoma dataset, there is little class imbalance.


\section{Additional Results}

\begin{figure}
    \centering
    \includegraphics[width=0.49\textwidth]{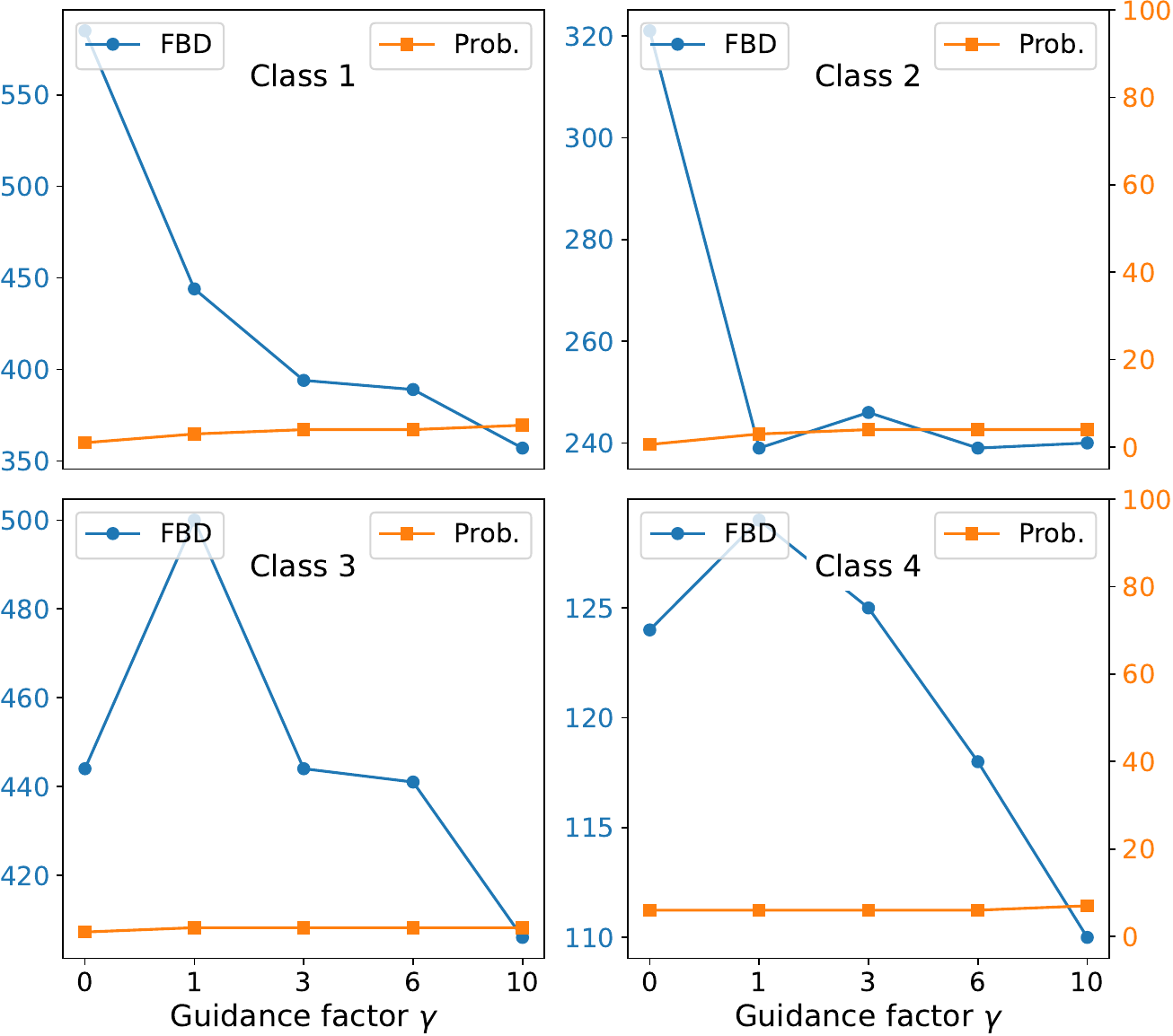}
    \caption{\textbf{Evaluation for cell type specific enhancer design with classifier guidance.} We condition Dirichlet FM to generate sequences that are only active in a class of cell type via classifier guidance with varying $\gamma$. Results are shown for 4 classes in the Fly Brain cell data. The left y-axis \textsc{FBD} is computed between the generated sequences and the data distribution conditioned on the target class. The right y-axis \textsc{Prob.} refers to the target class probability of a classifier for the generated sequences.}
    \label{fig:per-class-guidance-cls}
\end{figure}

\begin{figure}
    \centering
    \includegraphics[width=0.49\textwidth]{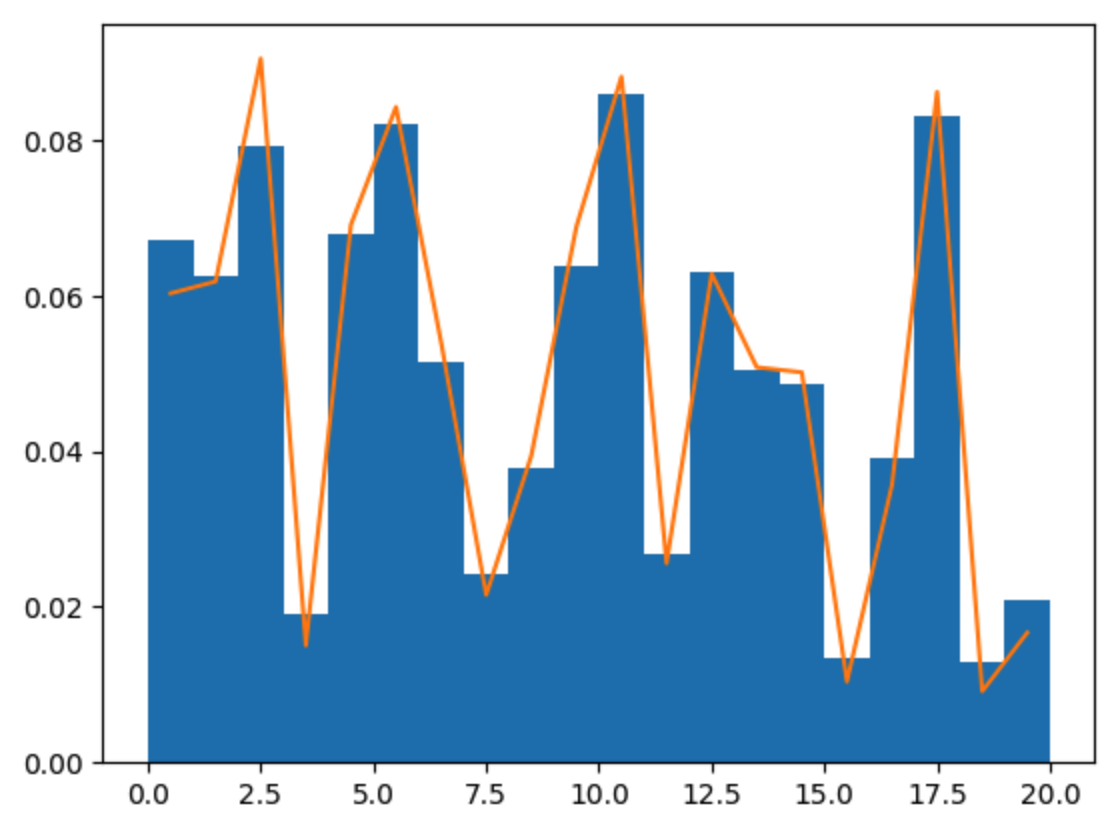}
    \caption{\textbf{Simulation results for classifier guidance on a conditional categorical distribution with an analytically tractable vector field.} In blue, we show a histogram (with frequencies normalized to density) of the generated classes for each class on the x-axis. In orange, we show the true probabilities of the toy distribution conditioned on the same class that we use for classifier guidance.}
    \label{fig:guidance-toy-experiment}
\end{figure}

\textbf{Analytical toy experiment for classifier guidance.}
As a toy experiment to demonstrate our classifier guidance procedure, we construct a distribution conditioned on a binary random variable. The conditional distribution is a categorical distribution over 20 classes. In this setup, the time-dependent class probabilities conditioned on a noisy point on the simplex can be computed analytically. Thus, we can use them to obtain the vector field for Dirichlet FM analytically. Furthermore, the class probabilities and the gradients of their logarithm (the score) can be computed analytically. Hence, we can simulate classifier guided Dirichlet FM analytically for this toy distribution.

The results in Figure \ref{fig:guidance-toy-experiment} show a close match between the empirical distribution of the generated data and the ground truth probabilities. This empirically confirms the effectiveness of our classifier guidance procedure that relies on converting probabilities to scores and converting them back to probabilities by solving a linear system of equations.

\begin{figure}
    \centering
    \includegraphics[width=0.45\textwidth]{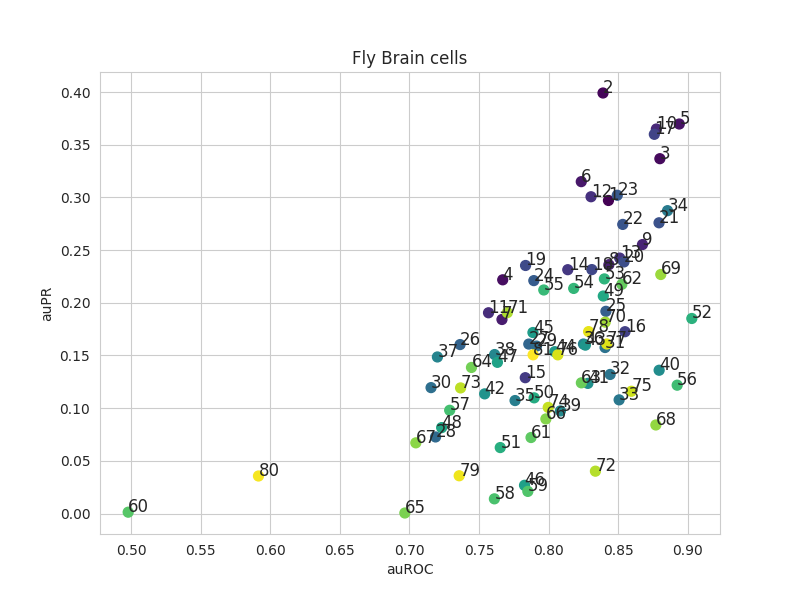}
    \includegraphics[width=0.45\textwidth]{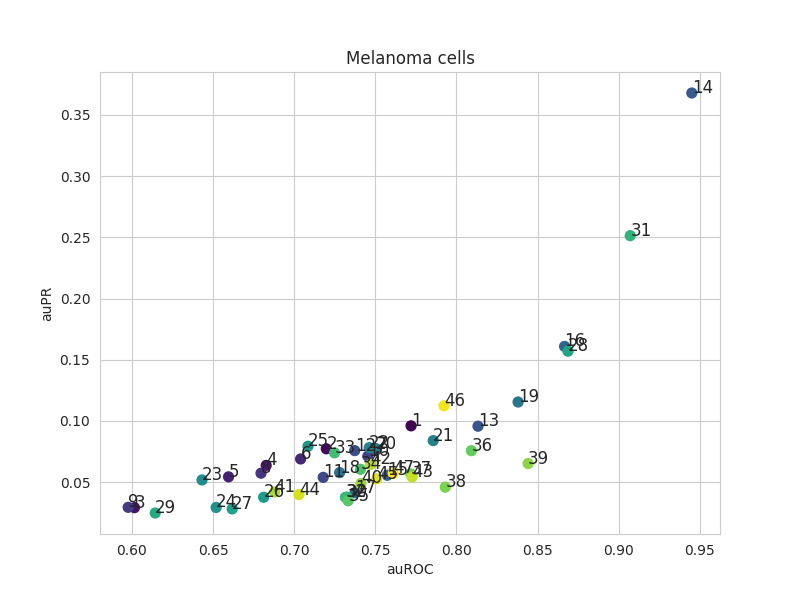}
    \caption{Per class performance of the classifiers similar to those used for evaluation and FBD calculation without early stopping. Shown are scatter plots for each class of the fly brain data (left) and the human melanoma cell data (right) between the area under the curve of the receiver operator characteristic (x-axis) and the area under the precision-recall curve (y-axis). The scatter plots match the results of the classifiers of \citet{atak2021deepmel2} and \citet{janssens2022deepflybrain} closely.}
    \label{fig:auroc_aupr}
\end{figure}


\end{document}